

\documentclass[authoryear,preprint,review,12pt]{elsarticle}



\usepackage{amssymb}


\usepackage[T1]{fontenc}

\usepackage{natbib}

\usepackage[dvips]{geometry}
\usepackage{amssymb,amsmath,latexsym}
\usepackage{amsthm}
\usepackage{pst-node}
\usepackage{xcolor}
\usepackage{wrapfig}
\newpsstyle{Cblue}{fillstyle=solid,fillcolor=blue!30}
\newpsstyle{Cred}{fillstyle=solid,fillcolor=red!30,shadow=true}

\usepackage{pstricks}    
\usepackage{graphicx}    
\usepackage{multido}

\usepackage{algpseudocode}
\usepackage{algorithm}
\usepackage{float}
\usepackage{enumitem}
\setlist[enumerate,1]{label=\arabic*., ref=\arabic*}
\setlist[enumerate,2]{label=\emph{\alph*}), ref=\theenumi.\emph{\alph*}}
\setlist[enumerate,3]{label=\roman*), ref=\theenumii.\roman*}

\setlength{\oddsidemargin}{-0.25in} 
\setlength{\textwidth}{7in}   
\setlength{\topmargin}{-.75in}  
\setlength{\textheight}{9.2in}  

\newtheorem{theorem}{Theorem}
\newtheorem{definition}{Definition}
\newtheorem{lemma}{Lemma}


\newcommand{\baseh}{
     \SpecialCoor
     \psset{shadow=true}
     \psset{radius=4mm}
     \psset{linecolor=black}
     \psset{linewidth=0.3mm}
     \psset{linestyle=solid}
     \Cnode(0;0){a}
     \Cnode(2.5;0){b}
     \Cnode(2.5;60){c}
     \Cnode(2.5;120){d}
     \Cnode(2.5;180){e}
     \Cnode(2.5;240){f}
     \Cnode(2.5;300){g}
     \psset{nodesep=2mm}
     \psset{shadow=false}
     \psset{linestyle=dashed}
     \ncline{-}{a}{b}
     \ncline{-}{a}{c}
     \ncline{-}{a}{d}
     \ncline{-}{a}{e}
     \ncline{-}{a}{f}
     \ncline{-}{a}{g}
     \ncline{-}{b}{c}
     \ncline{-}{c}{d}
     \ncline{-}{d}{e}
     \ncline{-}{e}{f}
     \ncline{-}{f}{g}
     \ncline{-}{g}{b}
     \psset{linestyle=solid}
}
\newcommand{\baseE}{
    \baseh
    \psset{linecolor=gray}
    \psset{linewidth=3mm}
    \psset{linestyle=solid}
    \psset{nodesep=-2mm}
    \ncline{-}{b}{c}
    \ncline{-}{c}{d}
    \ncline{-}{d}{e}
    \ncline{-}{f}{g}
    \ncline{-}{a}{d}
}

\def\polypmIIId#1{\pspolygon[linestyle=none,fillstyle=solid,fillcolor=black]}

\setlength{\parindent}{0in}

\journal{arXiv.org}

\begin{document}

\begin{frontmatter}



\title{Linking and Cutting Spanning Trees}


\author[label1]{Lu\'{i}s~M.~S.~Russo\corref{lsr}}
\author[label1]{Andreia Sofia Teixeira}
\author[label1]{Alexandre~P.~Francisco}
\address[label1]{INESC-ID and the Department of Computer Science and
  Engineering, \\ Instituto Superior T\'{e}cnico, Universidade de
  Lisboa.}
\cortext[lsr]{Corresponding author.}

\begin{abstract}
%
  We consider the problem of uniformly generating a spanning tree, of a
  connected undirected graph. This process is useful to compute statistics,
  namely for phylogenetic trees. We describe a Markov chain for producing
  these trees. For cycle graphs we prove that this approach
  significantly outperforms existing algorithms. For general graphs we
  obtain no analytical bounds, but experimental results show that the chain
  still converges quickly. This yields an efficient algorithm, also due to
  the use of proper fast data structures. To bound the mixing time of the
  chain we describe a coupling, which we analyse for cycle graphs and
  simulate for other graphs.
\end{abstract}

\begin{keyword}
%
Spanning Tree \sep Uniform Generation \sep Markov Chain \sep Mixing Time
\sep Link Cut Tree


\PACS 02.10.Ox  \sep 02.50.Ga \sep 02.50.Ng \sep 02.70.Uu


\MSC[2010] 05C81 \sep 05C85 \sep 60J10 \sep 60J22 \sep 65C40  \sep 68R10

\end{keyword}

\end{frontmatter}



\section{Introduction}
\label{sec:intro}
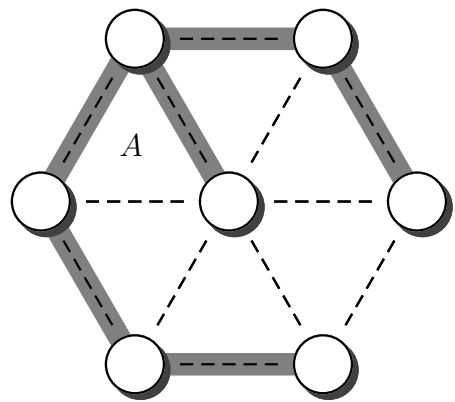
\begin{wrapfigure}{r}{0pt}
  \centering
  \begin{pspicture}(-2.9,-2.8)(3.0,2.8)
    \baseE
    \ncline{-}{e}{f}
    \baseh
    \rput(1.5;150){$A$}
  \end{pspicture}
  \caption{ A Spanning tree $A$ over a graph $G$.}
\label{fig:Intro}
\end{wrapfigure}
A spanning tree $A$ of an undirected connected graph $G$ is a tree, i.e., a
connected set of edges without cycles, that spans every vertex of
$G$. Every vertex of $G$ occurs in some edge of $A$. Figure~\ref{fig:Intro}
shows an example. The vertexes of the graph are represented by circles, the
set of vertexes is denoted by $V$. The edges of $G$ are represented by
dashed lines, the set of edges is represented by $E$. The edges of the
spanning tree $A$ are represented by thick grey lines.
We also use $V$ and $E$ to mean respectively the
size of the set $V$ and the size of set $E$, i.e., the number of vertexes and the number of edges.
In case the expression
can be interpreted as a set, instead of a number, we avoid the ambiguity by
writing $|V|$ and $|E|$, respectively. 

We aim to compute one of such spanning trees $A$, uniformly among all possible
spanning trees. The number of these trees may vary significantly, from $1$
to $V^{V-2}$, depending on the underlying graph~\citep[Chapter
22]{Borchardt,Cayley,aigner2010proofs}. Computing such a tree uniformly and efficiently is
challenging for several reasons: the number of such trees is usually
exponential; the structure of the resulting trees is largely heterogeneous,
as the underlying graphs change. The contributions of this paper are the
following:
\begin{itemize}
\item We present a new algorithm, which given a graph $G$, generates a
  spanning tree of $G$ uniformly at random. The algorithm uses the link-cut
  tree data structure to compute randomizing operations in $O(\log V)$
  amortized time per operation. Hence, the overall algorithm takes
  $O(\tau \log V)$ time to obtain an uniform spanning tree of
  $G$, where $\tau$ is the mixing time of a Markov chain that is
  dependent on $G$. Theorem~\ref{teo1} summarizes this result.

\item We propose a coupling to bound the mixing time $\tau$. The analysis
  of the coupling yields a bound for cycle graphs,
  Theorem~\ref{teo:cycleC}, and for graphs which consists of simple cycles connect by bridges or
  articulation points, Theorem~\ref{teo:cycleCpp}. We also simulate this procedure experimentally
  to obtain bounds for other graphs. The link-cut tree data structure is
  also key in this process. Section~\ref{sec:experimental-results} shows
  experimental results, including other classes of graphs.
  \end{itemize}
  The structure of the paper is as follows. In Section~\ref{sec:problem}
  we introduce the problem and explain its subtle nature. In
  Section~\ref{sec:idea} we explain our approach and point out that using
  the link cut tree data structure is much faster than repeating DFS
  searches. In Section~\ref{sec:details} we thoroughly justify our results,
  proving that the underlying Markov chain has the necessary properties and
  providing experimental results of our algorithm. In
  Section~\ref{sec:related-work} we describe the related work concerning
  random spanning trees, link cut trees and mixing time of Markov
  chains. In Section~\ref{sec:conclusions} we present our conclusions.
\section{The Challenge}
\label{sec:problem}
We start by describing an intuitive process for generating
spanning trees, that does not obtain a uniform distribution. Therefore it
produces some trees with a higher probability than others. This serves to
illustrate that the problem is harder than it may seem at first
glance. Moreover we explain why this process is biased, using a counting
argument.

A simple procedure to build $A$ consists in using a Union-Find data
structure~\citep*{Galler:1964:IEA:364099.364331}, to guarantee that $A$ does
not contain a cycle. Note that these structures are strictly incremental,
meaning that they can be used to detect cycles but can not be used to
remove an edge from the cycle. Therefore the only possible action is to
discard the edge that creates the cycle.

Let us analyse a concrete example of the resulting distribution of spanning
trees. We shall show that this distribution is not uniform. First generate
a permutation $p$ of $E$ and then process the edges in this order. Each
edge that does not produce a cycle is added to $A$, edges that would
otherwise produce cycles are discarded and the procedure continues with the
next edge in the permutation.

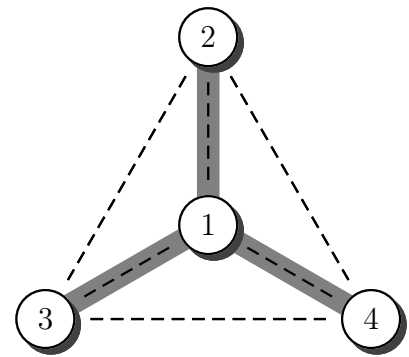
\begin{wrapfigure}{r}{0pt}
  \centering
  \begin{pspicture}(-2.8,-1.8)(2.8,3.0)
     \SpecialCoor
     \psset{shadow=true}
     \psset{radius=4mm}
     \psset{linecolor=black}
     \psset{linewidth=0.3mm}
     \psset{linestyle=solid}
     \Cnode(0;0){1}
     \Cnode(2.5;90){2}
     \Cnode(2.5;210){3}
     \Cnode(2.5;330){4}
     \psset{nodesep=2mm}
     \psset{shadow=false}
     \psset{linestyle=dashed}
     \ncline{-}{1}{2}
     \ncline{-}{1}{3}
     \ncline{-}{1}{4}
     \ncline{-}{2}{3}
     \ncline{-}{2}{4}
     \ncline{-}{3}{4}
     \psset{linestyle=solid}
    \psset{linecolor=gray}
    \psset{linewidth=3mm}
    \psset{linestyle=solid}
    \psset{nodesep=-2mm}
    \ncline{-}{1}{2}
    \ncline{-}{1}{3}
    \ncline{-}{1}{4}
     \psset{shadow=true}
     \psset{radius=4mm}
     \psset{linecolor=black}
     \psset{linewidth=0.3mm}
     \psset{linestyle=solid}
     \Cnode(0;0){1}
     \Cnode(2.5;90){2}
     \Cnode(2.5;210){3}
     \Cnode(2.5;330){4}
     \rput(0;0){1}
     \rput(2.5;90){2}
     \rput(2.5;210){3}
     \rput(2.5;330){4}
     \psset{nodesep=2mm}
     \psset{shadow=false}
     \psset{linestyle=dashed}
     \ncline{-}{1}{2}
     \ncline{-}{1}{3}
     \ncline{-}{1}{4}
     \ncline{-}{2}{3}
     \ncline{-}{2}{4}
     \ncline{-}{3}{4}
     \psset{linestyle=solid}

  \end{pspicture}
  \caption{ A star graph on $K_4$, centered at $1$.}
  \label{fig:k4}
\end{wrapfigure}
Consider the complete graph on 4 vertexes, $K_4$, and focus on the
probability of generating a star graph, centered at the vertex labeled
$1$. Figure~\ref{fig:k4} illustrates the star graph. The $K_4$ graph has
$6$ edges, hence there are $6! = 720$ different permutations. To produce
the star graph, from one such permutation, it is necessary that the edges
$(1,2)$ and $(1,3)$ are selected before the edge $(2,3)$ appears, in
general the edges $(1,u)$ and $(1,v)$ must occur before $(u,v)$. One
permutation that generates the star graph is
$(1,2), (1,3), (2,3), (1,4), (2,4),(3,4)$. Now $(2,3)$ can be moved to the
right to any of $3$ different locations so we know $4$ sequences that
generate the star graph. The same reasoning can be applied to $(2,4)$ which
can be moved once to the right. In total we counted $8$ different sequences
that generate the star graph, centered at $1$. For each of these sequences
it is possible to permute the vertexes $2, 3, 4$, amongst themselves. Hence
multiplying the previous count by $3! = 6$. In total we counted
$48 = 8 \times 6$ sequences that generate the star graph, therefore the
total probability of obtaining a star graph is $48/6!=1/15$. According to
Cayley's formula the probability to obtain the star graph centered
at $1$ should be $1/4^2=1/16$. Hence too many sequences are generating the
star graph centered at $1$.

In the next section we fix this bias by discarding some edge in the
potential cycle, not necessarily the edge that creates it.

\section{Main Idea}
\label{sec:idea}
To generate a uniform spanning tree start by generating an arbitrary
spanning tree $A$. One way to obtain this tree is to compute a depth first
search in $G$, in which case the necessary time is $O(V+E)$. In general we
wish that the mixing time of our chain is much smaller than $O(E)$,
specially for dense graphs. This initial tree is only generated once,
subsequent trees are obtained by the randomizing process. To randomize $A$
repeat the next process several times. Choose and edge $(u,v)$ from $E$,
uniformly at random, and consider the set $A \cup \{(u,v)\}$. If $(u,v)$
already belongs to $A$ the process stops, otherwise $A \cup \{(u,v)\}$
contains a cycle $C$. To complete the process choose and edge $(u',v')$
uniformly from $C \setminus \{(u,v)\}$ and remove it. Hence at each step
the set $A$ is transformed into the set
$(A \cup \{(u,v)\}) \setminus \{(u',v')\}$. An illustration of this process
is shown in Figure~\ref{fig:example}.

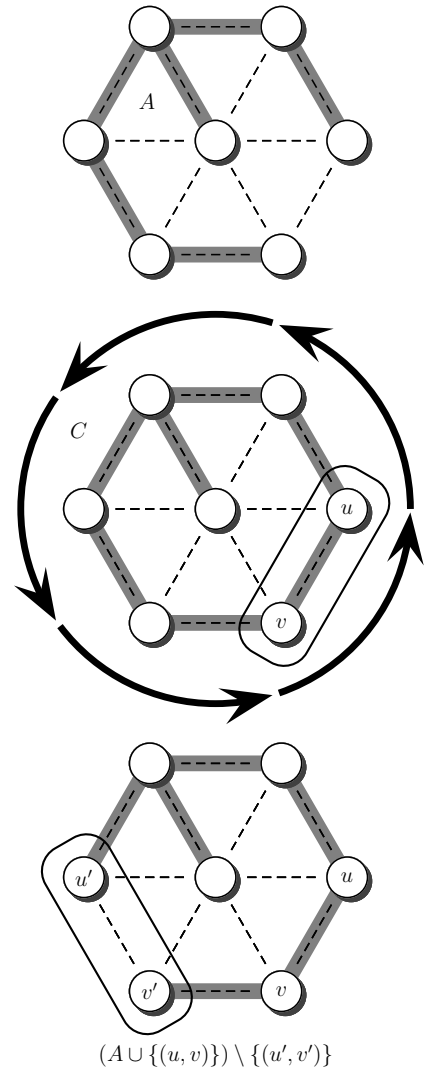
\begin{wrapfigure}{r}{0pt}
  \centering
  \scalebox{0.7}{
  \begin{pspicture}(-3.5,-17.5)(4.0,2.5)
    \rput(0,0.0){
    \baseE
    \ncline{-}{e}{f}
    \baseh
    \rput(1.5;150){$A$}
}
    \rput(0,-7.0){
      \baseE
      \ncline{-}{e}{f}
      \ncline{-}{g}{b}
      \baseh
      \psset{linecolor=black}
      \psset{linewidth=0.4mm}
      \ncbox[nodesep=3.5mm,boxsize=.7,linearc=.5]{g}{b}
      \baseh
      \rput(2.5;0){$u$}
      \rput(2.5;300){$v$}
      \psset{linewidth=1.3mm}
      \psset{arrowsize=7mm}
      \psarc[arrows=->](0,0){3.7}{0}{71}
      \psarc[arrows=->](0,0){3.7}{73}{143}
      \psarc[arrows=->](0,0){3.7}{145}{215}
      \psarc[arrows=->](0,0){3.7}{217}{287}
      \psarc[arrows=->](0,0){3.7}{289}{359}
      \rput(3.0;150){$C$}
    }
    \rput(0,-14.0){
      \baseE
      \ncline{-}{g}{b}
      \baseh
      \psset{linecolor=black}
     \psset{linewidth=0.4mm}
     \ncbox[nodesep=3.5mm,boxsize=.7,linearc=.5]{e}{f}
     \baseh
     \rput(2.5;0){$u$}
     \rput(2.5;300){$v$}
     \rput(2.5;180){$u'$}
     \rput(2.5;240){$v'$}
     \rput[B](3.5;-90){$(A \cup \{(u,v)\}) \setminus  \{(u',v')\}$}
    }
  \end{pspicture}
}
\caption{Edge swap procedure, inserting the edge $(u,v)$ into the initial
  tree $A$ generates a cycle $C$. The edge $(u',v')$ is removed from $C$.}
  \label{fig:example}
\end{wrapfigure}
This edge swapping process can be adequately modeled by a Markov chain,
where the states corresponds to different spanning trees and the
transitions among states correspond to the process we have just
described. In Section~\ref{sec:analysis} we study the ergodic properties of
this chain. For now let us focus on which data structures can be used to
compute the transition procedure efficiently. A simple solution to this
problem would be to compute a depth first search (DFS) on $A$, starting at
$u$ and terminating whenever $v$ was reached. This would allow us to
identify $C$ in $O(V)$ time, recall that $A$ contains exactly $V-1$
elements. The edge $(u',v')$ could then be easily removed. Besides $G$ the
elements of $A$ would also need to be represented with the adjacency list
data structure. For our purposes this approach is inefficient. This
computation is central to our algorithm and its complexity becomes a factor
in the overall performance. Hence we will now explain how to perform this
operation in only $O(\log V)$ amortized time with the link cut tree data
structure.

The link cut tree (LCT) is a data structure that can used to represent a
forest of rooted trees. The representation is dynamic so that edges can be
removed and added. Whenever an edge is removed the original tree is cut in
two. Adding an edge between two trees links them. This structure was
proposed by~\citet*{Sleator:1985:SBS:3828.3835}. Both the link and cut
operations can be computed in $O(\log V)$ amortized time.

The LCT can only represent trees, therefore the edge swap procedure must
first cut the edge $(u',v')$ and afterwards insert the edge $(u,v)$ with
the \texttt{Link} operation. The randomizing process needs to identify $C$
and select $(u',v')$ from it. The LCT can also compute this process in
$O(\log V)$ amortized time. The LCT works by partitioning the represented
tree into disjoint paths. Each path is stored in an auxiliary data
structure, so that any of its edges can be accessed efficiently in
$O(\log V)$ amortized time. To compute this process we force the path
$D = C \setminus\{(u,v)\}$ to become a disjoint path. This means that $D$
will be completely stored in one auxiliar data structure. Hence it is
possible to efficiently select and edge from it. Moreover the size of
$D$ can also be computed efficiently. The exact process, to force $D$ into
an auxiliar structure, is to make $u$ the root of the represented tree and
then access $v$. Algorithm~\ref{randomizeStep} shows the pseudo-code of the
edge swapping procedure. We can confirm, by inspection, that this process
can be computed in the $O(\log V)$ amortized time bound that is crucial for
our main result.

\begin{algorithm}[tbp]
  \caption{Edge swapping process}\label{randomizeStep}
  \begin{algorithmic}[1]
    \Procedure{EdgeSwap}{$A$}\Comment{$A$ is an LCT representation of the
      current spanning tree}
    \State $(u,v) \gets$ Chosen uniformly from $E$
    \If{$(u,v) \notin A$} \Comment{$O(\log V)$ time}
    \State $\mathtt{ReRoot}(A, u)$ \Comment{Makes $u$ the root of $A$}
    \State $D \gets \mathtt{Access}(A, v)$ \Comment{Obtains a
      representation of the path $C \setminus \{(u,v)\}$}
    \State $i \gets$ Chosen uniformly from $\{1, \ldots, |D|\}$
    \State $(u',v') \gets \mathtt{Select}(D, i)$ \Comment{Obtain the $i$-th
    edge from $D$.}
  \State $\mathtt{Cut}(A,u',v')$
  \State $\mathtt{Link}(A,u,v)$
    \EndIf
    \EndProcedure
  \end{algorithmic}
\end{algorithm}
\begin{theorem}
  \label{teo1}
  If $G$ is a graph and $A$ is a spanning tree of $G$ then a spanning tree
  $A'$ can be chosen uniformly from all spanning trees of $G$ in
  $O((V + \tau) \log V)$ time, where $\tau$ is the
  mixing time of an ergodic edge swapping Markov chain.
\end{theorem}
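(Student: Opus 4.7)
The plan is to split the running time into two disjoint phases: an \emph{initialization} phase that loads the given spanning tree $A$ into a link-cut tree, and a \emph{randomization} phase that iterates Algorithm~\ref{randomizeStep} exactly $\tau$ times. The output is read off the LCT at the end of the randomization phase. Correctness then reduces to the ergodicity claim that the edge swapping chain has the uniform distribution over spanning trees as its stationary distribution and mixes within $\tau$ steps in total variation; I would cite Section~\ref{sec:analysis} for this, since the hypothesis of the theorem already takes $\tau$ to be the mixing time of an ergodic chain.

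For the initialization phase I would start with an empty LCT on $V$ singleton nodes and perform the $V-1$ \texttt{Link} operations that correspond to the edges of $A$. By the bounds of~\citet{Sleator:1985:SBS:3828.3835}, each \texttt{Link} costs $O(\log V)$ amortized, so building the initial LCT takes $O(V \log V)$ time in total. Storing $E$ as an array alongside the LCT enables uniform sampling of an edge in $O(1)$ time, which is needed in every subsequent step.

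For the randomization phase I would argue by inspection of Algorithm~\ref{randomizeStep} that a single step performs only: one uniform draw from $E$; one membership test in the LCT; and a constant number of \texttt{ReRoot}, \texttt{Access}, \texttt{Select}, \texttt{Cut}, and \texttt{Link} calls, each $O(\log V)$ amortized on an LCT. Hence one step costs $O(\log V)$ amortized, and $\tau$ such steps cost $O(\tau \log V)$. Summing the two phases yields $O(V\log V) + O(\tau \log V) = O((V+\tau)\log V)$, which is the claimed bound.

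The main obstacle is not the time accounting, which is essentially bookkeeping over standard LCT amortized bounds, but ensuring that the LCT really does expose the cycle $C$ so that \texttt{Select} picks an edge uniformly from $C\setminus\{(u,v)\}$. I would justify this by observing that after $\mathtt{ReRoot}(A,u)$ followed by $\mathtt{Access}(A,v)$ the preferred path from root $u$ down to $v$ coincides with the unique $u$--$v$ path in $A$, which is exactly $C\setminus\{(u,v)\}$; this path sits in a single auxiliary splay tree of size $|C|-1$, so $|D|$ is known in $O(\log V)$ amortized time and the $i$-th edge on it is retrieved in the same bound. Once this is in place the time bound is immediate and the theorem follows modulo the ergodicity statement deferred to Section~\ref{sec:analysis}.
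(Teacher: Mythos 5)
Your proposal is correct and follows essentially the same route as the paper: represent the tree as a link-cut tree, observe that ReRoot$(u)$ followed by Access$(v)$ exposes $C\setminus\{(u,v)\}$ in a single auxiliary splay tree so each swap costs $O(\log V)$ amortized, run $\tau$ steps, and defer correctness to the ergodic analysis of Section~\ref{sec:analysis}. The only cosmetic difference is that you charge the $O(V\log V)$ term to building the initial LCT, whereas the paper attributes it to the non-accumulating $V\log V$ startup cost in the amortized bound for splay operations; both accountings are valid and yield the same total.
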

In section~\ref{sec:analysis} we prove that the process we described is
indeed an ergodic Markov chain, thus establishing the result. We finish
this section by pointing out a detail in Algorithm~\ref{randomizeStep}. In
the comment of line 3 we point out that the property $(u,v) \notin A$ must
be checked in at most $O(\log V)$ time. This can be achieved in $O(1)$ time
by keeping an array of booleans indexed by $E$. Moreover it can also be
achieved in $O(\log V)$ amortized time by using the LCT data structure,
essentially by delaying the verification until $D$ is determined and
verifying if $|D| \neq 1$.
\section{The details}
\label{sec:details}
%
\subsection{Ergodic Analysis}
\label{sec:analysis}
In this section, we analyse the Markov chain $M_t$ induced by the edge swapping
process. It should be clear that this process has the Markov property
because the probability of reaching a state depends only on the
previous state. In other words the next spanning tree depends only on the
current tree.

To prove that our procedure is correct we must show that the stationary
distribution is uniform for all states. Let us first establish that such a
stationary distribution exists. Note that, for a given finite graph $G$,
the number of spanning trees is also finite. More precisely for complete
graphs Cayley's formula yields $V^{V-2}$ spanning trees. This value is
an upper bound for other graphs, as all spanning trees of a certain graph
are also spanning trees of the complete graph with the same number of
vertexes. Therefore the chain is finite. If we show that it is irreducible
and aperiodic, it follows that it is ergodic~\cite[Corollary
7.6]{Mitzenmacher:2005:PCR:1076315} and therefore it has a stationary
distribution~\cite[Theorem 7.7]{Mitzenmacher:2005:PCR:1076315}.

The chain is aperiodic because self-loops may occur, i.e., transitions
where the underlying state does not change. Such transitions occur when
$(u,v)$ is already in $A$, therefore their probability is at least
$(V-1)/E$, because there are $V-1$ edges in a spanning tree $A$.

To establish that the chain is irreducible it is enough to show that for
any pair of states $i$ and $j$ there is a non-zero probability path from
$i$ to $j$. First note that the probability of any transition on the chain
is at least $1/(EV)$, because $(u,v)$ is chosen out of $E$ elements and
$(u',v')$ is chosen from $C \setminus \{(u,v)\}$, that contains at most
$V-1$ edges. To obtain a path from $i$ to $j$ let $A_i$ and $A_j$ represent
the respective trees. We consider the following cases:
\begin{itemize}
\item If $i=j$ we use a self-loop transition.
\item Otherwise, when $i\neq j$, it is possible to choose $(u,v)$
  from $A_j \setminus A_i$, and $(u',v')$ from
  $(C \setminus \{(u,v)\} ) \cap (A_i \setminus A_j) = C \setminus A_j $;
  note that the set equality follows from the assumption that $(u,v)$
  belongs to $A_j$. For the last property note that if no such $(u',v')$
  existed then $C \subseteq A_j$, which is a contradiction because $A_j$ is
  a tree and $C$ is a cycle. As mentioned above, the probability of this
  transition is at least $1/(EV)$. After this step the resulting tree is
  not necessarily $A_j$, but it is closer to that tree. More precisely
  $(A_i \cup \{(u,v)\}) \setminus \{(u',v')\}$ is not necessarily $A_j$,
  however the set
  $A_j \setminus ((A_i \cup \{(u,v)\}) \setminus \{(u',v')\})$ is smaller
  than the original $A_j \setminus A_i$. Its size decreases by $1$ because
  the edge $(u,v)$ exists on the second set but not on the first. Therefore
  this process can be iterated until the resulting set is empty and
  therefore the resulting tree coincides with $A_j$. The maximal size of
  $A_j \setminus A_i$ is $V-1$, because the size of $A_j$ is at most
  $V-1$. This value occurs when $A_i$ and $A_j$ do no share
  edges. Multiplying all the probabilities in the process of transforming
  $A_i$ into $A_j$ we obtain a total probability of at least
  $1/(EV)^{V-1}$.
\end{itemize}

Now that the stationary distribution is guaranteed to exist, we will show
that it coincides with the uniform distribution by proving that the chain
is time reversible~\cite[Theorem 7.10]{Mitzenmacher:2005:PCR:1076315}. We
prove that for any pair of states $i$ and $j$, with $j\neq i$, for which there exists
a transition from $i$ to $j$, with probability $P_{i,j}$, there exists,
necessarily, a transition from $j$ to $i$ with probability
$P_{j,i} = P_{i,j}$. If the transition from $i$ to $j$ exists it means that
there are edges $(u,v)$ and $(u',v')$ such that
$(A_i \cup \{(u,v)\}) \setminus \{(u',v')\} = A_j$, where $(u',v')$ belongs
to the cycle $C$ contained in $A_i \cup \{(u,v)\}$. Hence we also have that
$(A_j \cup \{(u',v')\}) \setminus \{(u,v)\} = A_i$, which means that the
tree $A_i$ can be obtained from the tree $A_j$ by adding the edge $(u',v')$
and removing the edge $(u,v)$. In other words, the process in
Figure~\ref{fig:example} is similar both top down or bottom up. This
process is a valid transition in the edge-swap chain, where the cycle $C$
is the same in both transitions, i.e., $C$ is the cycle contained in
$A_i \cup \{(u,v)\}$ and in $A_j \cup \{(u',v')\}$. Now we obtain our
result by observing that $P_{i,j} = 1/(E (C-1)) = P_{j,i}$. In the
transition from $i$ to $j$ the factor $1/E$ comes from the choice of
$(u,v)$ and the factor $1/(C-1)$ from the choice of $(u',v')$. In the
transition between $j$ to $i$, the factor $1/E$ comes from the choice of
$(u',v')$ and the factor $1/(C-1)$ from the choice of $(u,v)$. Hence we
established that the algorithm we propose correctly generates spanning
trees uniformly, provided we can sample from the stationary
distribution. Hence, we need to determine the mixing time of the chain,
i.e., the number of edge swap operations that need to be performed on an
initial tree until the distribution of the resulting trees is close enough
to the stationary distribution.

Before analyzing the mixing time of this chain we point out that it is
possible to use a faster version of this chain by choosing $(u,v)$
uniformly from $E \setminus A$, instead of from $E$. This makes the chain
faster but proving that it is aperiodic is trickier. In this chain we have
that $\Pr(M_{t+1} = i | M_{t} = i ) = 0$, for any state $i$. We will now
prove that $\Pr(M_{t+s} = i | M_{t} = i ) \neq 0$, for any state $i$ and
$s>1$. It is enough to show for $s=2$ and $s=3$, all other values follow
from the fact that the greatest common divisor of $2$ and $3$ is $1$. For
the case of $s=2$ we use the time reverse property and the following
deduction:
$\Pr(M_{t+2} = i | M_{t} = i ) \geq P_{i,j} P_{j,i} \geq (1/EV)^{2} >
0$. For the case of $s=3$ we observe that the cycle $C$ must contain at
least 3 edges $(u, v)$, $(u', v')$ and $(u'', v'')$. To obtain $A_j$ we
insert $(u,v)$ and remove $(u', v')$, now we move from this state to state
$A_k$ by inserting $(u'', v'')$ and removing $(u,v)$. Finally we move back
to $A_i$ by inserting $(u',v')$ and removing $(u'', v'')$. Hence, for this
case we have $\Pr(M_{t+3} = i | M_{t} = i) \geq (1/EV)^{3} > 0$

\subsection{A Coupling}
\label{sec:coping}

In this section, we focus on bounding the mixing time. We did not obtain
general analytical bounds from existing analysis techniques, such as
couplings~\citep{levin2017markov,Mitzenmacher:2005:PCR:1076315}, strong
stopping times~\citep{levin2017markov} and canonical
paths~\citep{sinclair1992improved}. The coupling technique yielded a bound
only for cycle graphs and moreover a simulation of the resulting coupling
converges for ladder graphs.

Before diving into the reasoning in this section, we first need a finer
understanding of the cycles generated in our process. We consider a closed
walk to be a sequence of vertexes $v_0, \ldots, v_n = v_0$, starting and
ending at the same vertex, and such that any two consecutive vertexes
$v_i$ and $v_{i+1}$ are adjacent, in our case
$(v_i,v_{i+1}) \in A \cup \{(u,v)\}$. The cycles we consider are simple,
in the sense that they consist of a set of edges for which a closed walk
can be formed, that traverses all the edges in the cycle and moreover no
vertex repetitions are allowed, except for the vertex $v_0$, which is only
repeated at the end. Formally this can be stated as: if $0 \leq i,j < n$
and $i \neq j$ then $v_i \neq v_j$.

The cycles that occur in our randomizing process are even more regular. A
cordless cycle in a graph is a cycle such that no two vertices of the
cycle are connected by an edge that does not itself belong to the
cycle. The cycles we produce also have this property, otherwise if such a
chord existed then it would form a cycle on our tree $A$, which is a
contradiction. In fact a spanning tree over a graph can alternatively be
defined as a set of edges such that for any pair of vertexes $v$ and $v'$
there is exactly one path linking $v$ to $v'$.

A coupling is an association between two copies of the same Markov chain
$X_t$ and $Y_t$, in our case the edge swapping chain. The goal of a
coupling is to make the two chains meet as fast as possible, i.e., obtain
$X_\tau=Y_\tau$, for a small value of $\tau$. At this point we say that the
chains have coalesced. The two chains may share information and cooperate
towards this goal. However, when analysed in isolation, each chain must be
indistinguishable from the original chain $M_t$. Obtaining $X_\tau=Y_\tau$
with a high probability implies that at time $\tau$ the chain is well
mixed. Precise statements of these claims are given in
Section~\ref{sec:related-work}.

We use the random variable $X_t$ to represent the state of the first chain,
at time $t$. The variable $Y_t$ represents the state of the second
chain. We consider the chain $X_t$ in state $x$ and the chain $Y_t$ in
state $y$. In one step the chain $X_t$ will transition to the state
$x' = X_{t+1}$ and the chain $Y_t$ will transition to state $y' = Y_{t+1}$.

The set $A_x \setminus A_y$ contains the edges that are exclusive to $A_x$
and likewise the set $A_y \setminus A_x$ contains the edges that are
exclusive to $A_y$. The number of such edges provides a distance
$d(x,y) = |A_x \setminus A_y| = |A_y \setminus A_x|$, that measures how far
apart are the two states. We refer to this distance as the edge
distance. We define a coupling when $d(x,y) \leq 1$, which can be extended
for states $x$ and $y$ that are farther apart, by using the path coupling
technique~\citep*{646111}.

To use the path coupling technique we cannot alter the behavior of the
chain $X_t$ as, in general, it is determined by the previous element in the
path. We denote by $i_x$ the edge that gets added to $A_x$, and by $o_x$
the edge that gets removed from the corresponding cycle
$C_x \subseteq A_x \cup \{i_x\}$, in case such a cycle exists. Likewise,
$i_y$ represents the edge that is inserted into $A_y$ and $o_y$ the edge
that gets removed from the corresponding cycle
$C_y \subseteq A_y \cup \{i_y\}$, in case such a cycle exists. The edge
$i_x$ is chosen uniformly at random from $E$ and $o_x$ is chosen uniformly
at random from $C_x \setminus \{i_x\}$. The edges $i_y$ and $o_y$ will be
obtained by trying to mimic the chain $X_t$, but still exhibiting the same
behavior as $M_t$. In this sense the information flows from $X_t$ to $Y_t$.
Let us now analyse $d(x,y)$.

\subsubsection{$d(x,y) = 0$}
If $d(x,y) = 0$ then $x = y$, which means that the corresponding trees are
also equal, $A_y = A_x$. In this case $Y_t$ uses the same transition as
$X_t$, by inserting $i_x$, i.e., set $i_y = i_x$, and removing $o_x$, i.e.,
set $o_y = o_x$.

\subsubsection{$d(x,y) = 1$}
If $d(x,y) = 1$ then the edges $e_x \in A_x \setminus A_y$ and
$e_y \in A_y \setminus A_x$ exist and are distinct. We also need the
following sets: $I= C_x \cap C_y$, \mbox{$E_x = C_x \setminus I$} and
\mbox{$E_y =C_y \setminus I$}. The set $I$ represents the edges that are
common to $C_x$ and $C_y$. The set $E_x$ represents the edges that are
exclusive to $C_x$, from the cycle point of view. This should not be
confused with $e_x$ which represents the edge that is exclusive to $A_x$,
i.e., from a tree point of view. Likewise, $E_y$ represents the edges that
are exclusive to $C_y$. Also we consider the cycle $C_e$ as the cycle
contained in $A_x \cup \{e_y\}$, which necessarily contains $e_x$. The
following Lemma describes the precise structure of these sets.
\begin{lemma}
\label{lemma:CePartition}
When $i_y = i_x$ we either have $C_x = C_y = I$ and therefore
$E_x = E_y = \emptyset$ or $E_x$, $E_y$ and $I$ form simple paths and the
following properties hold:
  \begin{itemize}
  \item $e_x \in E_x$, $e_y \in E_y$, $i_x \in I$
  \item $E_x \cap E_y = \emptyset$, $E_x \cap I = \emptyset$,
    $E_y \cap I = \emptyset$
  \item $E_x \cup I = C_x$, $E_y \cup I = C_y$, $E_x \cup E_y = C_e$.
  \end{itemize}
\end{lemma}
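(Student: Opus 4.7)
The plan is to reduce the lemma to a careful case analysis on whether $e_x$ lies on the unique $A_x$-path between the endpoints of $i_x$. Write $i_x = (a,b)$, $e_x = (p,q)$, $e_y = (r,s)$, and let $P_x$ and $P_y$ denote the unique paths in $A_x$ and $A_y$ respectively from $a$ to $b$, so that $C_x = P_x \cup \{i_x\}$ and $C_y = P_y \cup \{i_x\}$ (recall $i_y = i_x$).

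First I would dispatch the degenerate case $e_x \notin P_x$. Then $P_x$ is contained in $A_x \setminus \{e_x\} \subseteq A_y$, so by uniqueness of paths in a tree $P_y = P_x$, and hence $C_x = C_y$. This yields $I = C_x = C_y$ and $E_x = E_y = \emptyset$, which is the first alternative in the statement.

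The substantive case is $e_x \in P_x$. Here removing $e_x$ from $A_x$ splits it into two subtrees $T_1$ and $T_2$, with $a,p \in T_1$ and $q,b \in T_2$ up to swapping. Because $A_y$ is connected, $e_y$ must join $T_1$ and $T_2$; relabel so that $r \in T_1$ and $s \in T_2$. In $T_1$ the unique paths from $a$ to $p$ and from $a$ to $r$ share a prefix ending at some last common vertex, which I call $\alpha$; let $\beta$ be the symmetric vertex in $T_2$ defined from the paths $b \to q$ and $b \to s$. Using only the unique-paths property of trees, I would then write
\begin{align*}
P_x &= (a \leadsto \alpha) \cup (\alpha \leadsto p) \cup \{e_x\} \cup (q \leadsto \beta) \cup (\beta \leadsto b),\\
P_y &= (a \leadsto \alpha) \cup (\alpha \leadsto r) \cup \{e_y\} \cup (s \leadsto \beta) \cup (\beta \leadsto b),
\end{align*}
where each $\leadsto$ denotes the unique path in the appropriate subtree. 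By the very choice of $\alpha$, the segments $\alpha \leadsto p$ and $\alpha \leadsto r$ are edge-disjoint in $T_1$, and an analogous statement holds at $\beta$ in $T_2$. Intersecting the two decompositions then gives $I = \{i_x\} \cup (a \leadsto \alpha) \cup (\beta \leadsto b)$, $E_x = (\alpha \leadsto p) \cup \{e_x\} \cup (q \leadsto \beta)$, and $E_y = (\alpha \leadsto r) \cup \{e_y\} \cup (s \leadsto \beta)$. Each of these is a concatenation of edge-disjoint tree paths with a bridging edge, hence a simple $\alpha$-to-$\beta$ path; and the disjointness properties together with $E_x \cup I = C_x$, $E_y \cup I = C_y$, $e_x \in E_x$, $e_y \in E_y$, $i_x \in I$ all follow by inspection.

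It remains to identify $C_e$. Since $C_e$ is the unique cycle in $A_x \cup \{e_y\}$, it equals $\{e_y\}$ together with the unique $A_x$-path from $r$ to $s$; that path must cross $e_x$, and using the same $\alpha$ and $\beta$ it decomposes as $(r \leadsto \alpha) \cup (\alpha \leadsto p) \cup \{e_x\} \cup (q \leadsto \beta) \cup (\beta \leadsto s)$, which matches $E_x \cup E_y$ edge by edge. The main obstacle is purely bookkeeping: one must ensure the branching vertices $\alpha, \beta$ are well-defined (they are, since $T_1, T_2$ are trees) and that no edge is double-counted across the three paths. Once the decomposition above is written down, every clause of the lemma is a direct verification.
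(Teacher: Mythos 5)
Your argument is correct, but note that the paper offers no proof of Lemma~\ref{lemma:CePartition} at all: the statement is followed immediately by commentary and is justified only by the schematic picture in Figure~\ref{fig:schema}. Your write-up supplies exactly the details that figure leaves implicit, and in the same spirit: you split $A_x \setminus \{e_x\}$ into the two components $T_1$ and $T_2$, observe that $e_y$ must reconnect them, and identify the two branching vertices $\alpha$ and $\beta$ at which the three paths $I$, $E_x$ and $E_y$ meet (these are the two junction nodes drawn in the figure). The explicit decompositions of $P_x$, of $P_y$, and of the unique $r$--$s$ path in $A_x$ then yield $E_x \cup I = C_x$, $E_y \cup I = C_y$ and $E_x \cup E_y = C_e$ by direct edge-by-edge comparison, and the degenerate alternative $C_x = C_y = I$ falls out cleanly from the case $e_x \notin P_x$. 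Two small points are worth making explicit in a final version: first, $\alpha \neq \beta$ because they lie in the disjoint components $T_1$ and $T_2$, so the three $\alpha$--$\beta$ paths are genuinely simple (some of the tree segments may be empty, e.g.\ when $a = \alpha$, which is harmless); second, the lemma is only ever invoked when $i_x \notin A_x$ and $i_x \neq e_y$, so that $C_x$ and $C_y$ both exist --- your proof assumes this tacitly, as does the paper. With those caveats, your proof is a strict improvement on what the paper provides.
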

Notice that in particular this means that, in the non-trivial case, $E_x$
and $E_y$ partition $C_e$. A schematic representation of this Lemma is
shown in Figure~\ref{fig:schema}.
\begin{figure}[tbp]
  \input{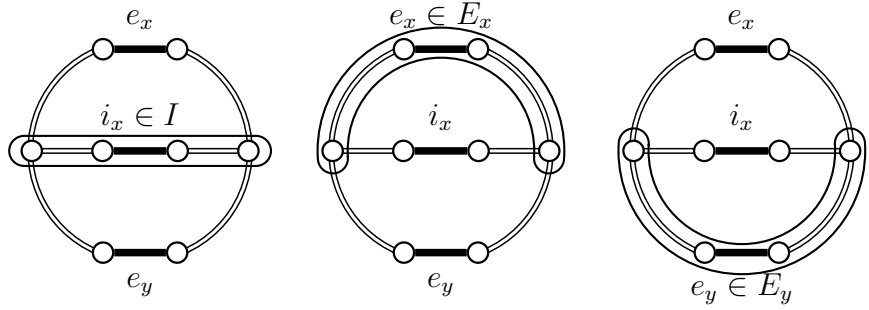}
  \renewcommand{\baseh}{

    \pscircle[style=generic]{1.5}

    \psarc[linecolor=white,linewidth=0.2](0,0){1.5}{70}{110}
    \Cnode[style=node](1.44;70){q1r}
    \Cnode[style=node](1.44;110){q1l}

    \psarc[linecolor=white,linewidth=0.2](0,0){1.5}{-110}{-70}
    \Cnode[style=node](1.44;-110){bl}
    \Cnode[style=node](1.44;-70){br}

    \psline[style=generic](0.5,0)(1.5;0)
    \psline[style=generic](-0.5,0)(1.5;180)

    \Cnode[style=node](0.5,0){cr}
    \Cnode[style=node](-0.5,0){cl}

    \Cnode[style=node](1.44;0){ir}
    \Cnode[style=node](1.44;180){il}
  }

  \begin{center}
    \begin{pspicture}[showgrid=false](-2.0,-2.0)(10.0,2.0) \SpecialCoor
      \rput(0,0){
      \baseh
      \ncline[style=edge]{-}{q1r}{q1l}
      \nbput{$e_x$}
      \ncline[style=edge]{-}{cl}{cr}
      \naput{$i_x \in I$}
      \ncline[style=edge]{-}{bl}{br}
      \nbput{$e_y$}
      \ncbox[nodesep=1.5mm,boxsize=.2,linearc=.2]{ir}{il}
    }
    \rput(4,0){
      \baseh
      \ncline[style=edge]{-}{q1r}{q1l}
      \nbput{$e_x \in E_x$}
      \ncline[style=edge]{-}{cl}{cr}
      \naput{$i_x$}
      \ncline[style=edge]{-}{bl}{br}
      \nbput{$e_y$}
      \ncarcbox[nodesep=1.5mm,boxsize=.2,
      linearc=.2,arcangle=90]{<->}{ir}{il}
    }
    \rput(8,0){
      \baseh
      \ncline[style=edge]{-}{q1r}{q1l}
      \nbput{$e_x$}
      \ncline[style=edge]{-}{cl}{cr}
      \naput{$i_x$}
      \ncline[style=edge]{-}{bl}{br}
      \nbput{$e_y \in E_y$}
      \ncarcbox[nodesep=1.5mm,boxsize=.2,
      linearc=.2,arcangle=90]{<->}{il}{ir}
    }
    \end{pspicture}
  \end{center}
  \caption{Schematic representation of the relations between $E_x$, $E_y$
    and $I$.}
  \label{fig:schema}
\end{figure}

We have several different cases described below.
Aside from the lucky cases \ref{item:swap}~and~\ref{item:ExICol}, we will
usually choose $i_y = i_x$, as $Y_t$ tries to copy $X_t$. Likewise, if
possible, we would like to set $o_y = o_x$. When this is not possible we
must choose $o_y \in E_y$, ideally we would choose $o_y = e_y$, but we must
be extra careful with this process to avoid loosing the behavior of
$M_t$. To maintain this behaviour, we must sometimes choose
$o_y \in E_y \setminus \{ e_y\}$. Since $X_t$ provides no information on
this type of edges we use $o_y = s_y$ chosen uniformly from this
$E_y \setminus \{ e_y\}$, i.e., select from $C_y$ but not $e_y$ nor edges
that are also in $C_x$.

There is a final twist to this choice, which makes the coupling non
Markovian, i.e., it does not verify the conditions in
Equations~\eqref{eq:2} and~\eqref{eq:3}. We can choose $o_y = e_y$ more
often than would otherwise be permissible, by keeping track of how $e_y$
was determined. If $e_y$ was obtained deterministically, for example by the
initial selection of $x$ and $y$, then this is not possible. In general
$e_y$ might be determined by the changes in $X_t$, in which case we want to
take advantage of the underlying randomness. Therefore, we keep track of the
random processes that occur. The exact information we store is a set of
edges $U_y \subseteq C_e \setminus \{e_x\}$ such that $e_y \in U_y$ and
moreover this set contains the edges that are equally likely to be
$e_y$. This information can be used to set $o_y = e_y$ when $s_y \in U_y$,
however after such an action the information on $U_y$ must be purged.

To illustrate the possible cases we use Figures~\ref{fig:swap}
to~\ref{fig:BigUn}, where the edges drawn with double lines represent a
generic path, that may contain several edges, or none at all. The precise
cases are the following:
  \begin{enumerate}
  \item If the chain $X_t$ loops ($x' = x$), because $i_x \in A_x$ then
    $Y_t$ also loops and therefore $y' = y$. The set $U_y$ does not change,
    i.e., set $U_{y'} = U_y$.
    \label{item:loop}
  \item If $i_x = e_y$ and $o_x = e_x$ then set $i_y = e_x$ and
    $o_y = e_y$. In this case the chains do not coalesce, they swap
    states because $x' = y$ and $y' = x$, see Figure~\ref{fig:swap}. Set
    $U_{y'} = C_{e} \setminus \{ e_{x'}\}$.
    \label{item:swap}
  \begin{figure}[tbp]
    \input{pststyles.tex}
    \renewcommand{\baseh}{
      \pscircle[style=generic]{1}

      \psarc[linecolor=white,linewidth=0.2](0,0){1}{70}{110}
      \Cnode[style=node](0.94;70){q1r}
      \Cnode[style=node](0.94;110){q1l}

      \psarc[linecolor=white,linewidth=0.2](0,0){1}{-110}{-70}
      \Cnode[style=node](0.94;-70){q2r}
      \Cnode[style=node](0.94;-110){q2l}
    }

    \begin{center}
      \begin{pspicture}[showgrid=false](-1.5,-6.7)(5,2.5) \SpecialCoor
        \psframe[framearc=.3,linewidth=0.05](-1.5,-2)(5.0,2.5)
        \psframe[framearc=.3,linewidth=0.05](-1.5,-6.7)(5.0,-2.5)
        \rput(0,0){
          \baseh \rput(0,2.0){$x$} \ncline[style=edge]{-}{q2l}{q2r}
          \nbput{$o_x = e_x \ \downarrow$}
          \ncline[style=insert]{-}{q1l}{q1r}
          \naput{$i_x = e_y \ \downarrow$} }
        \rput(3.5,0){%
          \baseh \rput(0,2.0){$y$} \ncline[style=edge]{-}{q1l}{q1r}
          \naput{$o_y = e_y \ \uparrow$} \ncline[style=insert]{-}{q2l}{q2r}
          \nbput{$i_y = e_x \ \uparrow$} }
        \rput(0,-5.0){
          \baseh \rput(0,2.0){$x'=y$} \ncline[style=edge]{-}{q1l}{q1r}
          \naput{$e_{x'}$} }
        \rput(3.5,-5.0){
          \baseh \rput(0,2.0){$y'=x$} \ncline[style=edge]{-}{q2l}{q2r}
          \nbput{$e_{y'}$} }

      \end{pspicture}
    \end{center}
    \caption{Case \ref{item:swap}.}
    \label{fig:swap}
  \end{figure}
  \item If $i_x = e_y$ and $o_x \neq e_x$ then set $i_y = e_x$ and
    $o_y = o_x$. In this case the chains coalesce, i.e., $x'=y'$, see
    Figure~\ref{fig:ExICol}. When the chains coalesce the edges $e_{x'}$
    and $e_{y'}$ no longer exist and the set $U_{y'}$ is no longer
    relevant.
    \label{item:ExICol}
  \begin{figure}[tbp]
    \input{pststyles}
    \renewcommand{\baseh}{
      \pscircle[style=generic]{1}

      \psarc[linecolor=white,linewidth=0.2](0,0){1}{10}{50}
      \Cnode[style=node](0.94;10){q1r}
      \Cnode[style=node](0.94;50){q1l}

      \psarc[linecolor=white,linewidth=0.2](0,0){1}{130}{170}
      \Cnode[style=node](0.94;130){q2r}
      \Cnode[style=node](0.94;170){q2l}

      \psarc[linecolor=white,linewidth=0.2](0,0){1}{-110}{-70}
      \Cnode[style=node](0.94;-110){bl}
      \Cnode[style=node](0.94;-70){br}
    }

    \begin{center}
      \begin{pspicture}[showgrid=false](-3.0,-6.2)(6.5,2.0) \SpecialCoor
        \psframe[framearc=.3,linewidth=0.05](-3.0,-2.0)(6.5,2.0)
        \psframe[framearc=.3,linewidth=0.05](-3.0,-6.2)(6.5,-2.5)
        \rput(0,0){%
          \baseh \rput(0,1.5){$x$} \ncline[style=edge]{-}{q1l}{q1r}
          \naput{$e_x$} \ncline[style=insert]{-}{q2r}{q2l}
          \nbput{$i_x = e_y \ \downarrow$} \ncline[style=edge]{-}{bl}{br}
          \nbput{$o_x \ \downarrow$} }
        \rput(3.5,0){ \baseh \rput(0,1.5){$y$}
          \ncline[style=insert]{-}{q1l}{q1r}
          \naput{$\downarrow \ i_y = e_x$} \ncline[style=edge]{-}{q2r}{q2l}
          \nbput{$e_y$} \ncline[style=edge]{-}{bl}{br}
          \nbput{$o_y = o_x \ \downarrow$} }
        \rput(1.75,-4.5){%
          \baseh \rput(0,1.5){$x' = y'$} \ncline[style=edge]{-}{q1l}{q1r}
          \ncline[style=edge]{-}{q2r}{q2l} }
      \end{pspicture}
    \end{center}
    \caption{Case \ref{item:ExICol}.}
    \label{fig:ExICol}
  \end{figure}
  \item If $i_x \neq e_y$ set $i_y = i_x$. We now have $3$ sub-cases, which
    are further sub-divided. These cases depend on whether $|C_x| = |C_y|$,
    $|C_x| < |C_y|$ or $|C_x| > |C_y|$. We start with $|C_x| = |C_y|$ which
    is simpler and establishes the basic situations. When $|C_x| < |C_y|$
    or $|C_x| > |C_y|$ we use some Bernoulli random variables to balance
    out probabilities and whenever possible reduce to the cases considered
    for $|C_x| = |C_y|$. When this is not possible we present the
    corresponding new situation.
    \label{item:GenI}
    \begin{enumerate}
    \item If $|C_x| = |C_y|$ we have the following situations:
      \label{item:Eqs}
      \begin{enumerate}
      \item If $o_x = e_x$ then set $o_y = e_y$. In this case the chains
        coalesce, see Figure~\ref{fig:EqOeX}.
        \label{item:EqOeX}
        \begin{figure}[tbp]
          \input{pststyles}
          \renewcommand{\baseh}{

            \pscircle[style=generic]{1.5}

            \psarc[linecolor=white,linewidth=0.2](0,0){1.5}{70}{110}
            \Cnode[style=node](1.44;70){q1r}
            \Cnode[style=node](1.44;110){q1l}

            \psarc[linecolor=white,linewidth=0.2](0,0){1.5}{-110}{-70}
            \Cnode[style=node](1.44;-110){bl}
            \Cnode[style=node](1.44;-70){br}

            \psline[style=generic](0.5,0)(1.5;0)
            \psline[style=generic](-0.5,0)(1.5;180)

            \Cnode[style=node](0.5,0){cr}
            \Cnode[style=node](-0.5,0){cl}

            \Cnode[style=node](1.44;0){ir}
            \Cnode[style=node](1.44;180){il}
          }

          \begin{center}
            \begin{pspicture}[showgrid=false](-2.0,-7.5)(6.0,3.0) \SpecialCoor
              \psframe[framearc=.3,linewidth=0.05](-2.0,-2.5)(6.0,3.0)
              \psframe[framearc=.3,linewidth=0.05](-2.0,-7.5)(6.0,-3.0)
              \rput(0,0){%
                \baseh \rput(0,2.5){$x$} \ncline[style=insert]{-}{q1r}{q1l}
                \nbput{$i_x \ \downarrow$} \ncline[style=edge]{-}{cl}{cr}
                \nbput{$o_x = e_x \ \downarrow$} }
              \rput(4.0,0){ \baseh \rput(0,2.5){$y$}
                \ncline[style=insert]{-}{q1r}{q1l}
                \nbput{$i_y = i_x \ \downarrow$} \ncline[style=edge]{-}{bl}{br}
                \nbput{$o_y = e_y \ \downarrow$} }
              \rput(2.0,-5.5){%
                \baseh \rput(0,2.0){$x' = y'$} \ncline[style=edge]{-}{q1r}{q1l} }
            \end{pspicture}
          \end{center}
          \caption{Case~\ref{item:EqOeX}, case~\ref{item:SmUnC} and
            case~\ref{item:BigEx}.}
          \label{fig:EqOeX}
        \end{figure}
      \item If $o_x \in I \setminus \{i_x\}$ then set $o_y = o_x$. In this
        case the chains do not coalesce, in fact the exclusive edges remain
        unchanged, i.e., $e_{x'} = e_x$ and $e_{y'} = e_y $, see
        Figures~\ref{fig:EqCommonNix}~and~\ref{fig:EqCommonInx}. When
        $o_x \notin C_e$ the set $C_{e'}$ remains equal to $C_e$ and
        likewise $U_{y'}$ remains equal to $U_y$, see
        Figure~\ref{fig:EqCommonNix}. Otherwise when $o_x \in C_e$ the set
        $C_{e'}$ is different from $C_e$ and we assign
        $U_{y'} = U_y \cap C_{e'}$, see Figure~\ref{fig:EqCommonInx}.
        \label{item:EqCommon}
        \begin{figure}[tbp]
          \input{pststyles}
          \renewcommand{\baseh}{

            \pscircle[style=generic]{1}

            \psarc[linecolor=white,linewidth=0.2](0,0){1}{10}{50}
            \Cnode[style=node](0.94;10){q1r}
            \Cnode[style=node](0.94;50){q1l}

            \psarc[linecolor=white,linewidth=0.2](0,0){1}{130}{170}
            \Cnode[style=node](0.94;130){q2r}
            \Cnode[style=node](0.94;170){q2l}

            \psarc[linecolor=white,linewidth=0.2](0,0){1}{-110}{-70}
            \Cnode[style=node](0.94;-110){bl}
            \Cnode[style=node](0.94;-70){br}

            \psline[style=generic](0.35,-0.3)(1.0;-20)
            \psline[style=generic](-0.35,-0.3)(1.0;-160)

            \Cnode[style=node](0.35,-0.3){cr}
            \Cnode[style=node](-0.35,-0.3){cl}

            \Cnode[style=node](0.94;-20){ir}
            \Cnode[style=node](0.94;-160){il}
          }

          \begin{center}
            \begin{pspicture}[showgrid=false](-2.0,-6.2)(8.5,2.0) \SpecialCoor
              \psframe[framearc=.3,linewidth=0.05](-2.0,-2.0)(8.5,2.0)
              \psframe[framearc=.3,linewidth=0.05](-2.0,-6.2)(8.5,-2.5)
              \rput(0,0){%
                \baseh \rput(0,1.5){$x$} \ncline[style=edge]{-}{q1l}{q1r}
                \naput{$\uparrow \ o_x$} \ncline[style=insert]{-}{q2r}{q2l}
                \nbput{$i_x \ \downarrow$} \ncline[style=edge]{-}{bl}{br}
                \nbput{$e_x$} }
              \rput(5.3,0){ \baseh \rput(0,1.5){$y$}
                \ncline[style=edge]{-}{q1l}{q1r} \naput{$\uparrow \ o_y = o_x$}
                \ncline[style=insert]{-}{q2r}{q2l}
                \nbput{$i_y = i_x \ \downarrow$} \ncline[style=edge]{-}{cl}{cr}
                \naput{$e_y$} }
              \rput(0,-4.5){%
                \baseh \rput(0,1.5){$x'$} \ncline[style=edge]{-}{q2r}{q2l}
                \ncline[style=edge]{-}{bl}{br} \nbput{$e_{x'} = e_x$} }
              \rput(5.3,-4.5){%
                \baseh \rput(0,1.5){$y'$} \ncline[style=edge]{-}{q2r}{q2l}
                \ncline[style=edge]{-}{cl}{cr} \naput{$e_{y'}=e_y$} }
            \end{pspicture}
          \end{center}
          \caption{Case~\ref{item:EqCommon}, case~\ref{item:SmCommonUy} and
            case~\ref{item:BigCommon}, when $o_x \notin C_e$.}
          \label{fig:EqCommonNix}
        \end{figure}
        \begin{figure}[tbp]
          \input{pststyles}
          \renewcommand{\baseh}{

            \pscircle[style=generic]{1.5}

            \psarc[linecolor=white,linewidth=0.2](0,0){1.5}{10}{50}
            \Cnode[style=node](1.44;10){q1r}
            \Cnode[style=node](1.44;50){q1l}

            \psarc[linecolor=white,linewidth=0.2](0,0){1.5}{130}{170}
            \Cnode[style=node](1.44;130){q2r}
            \Cnode[style=node](1.44;170){q2l}

            \psarc[linecolor=white,linewidth=0.2](0,0){1.5}{-110}{-70}
            \Cnode[style=node](1.44;-110){bl}
            \Cnode[style=node](1.44;-70){br}

            \psline[style=generic](0.5,-0.3)(1.5;-20)
            \psline[style=generic](-0.5,-0.3)(1.5;-160)

            \Cnode[style=node](0.5,-0.3){cr}
            \Cnode[style=node](-0.5,-0.3){cl}

            \Cnode[style=node](1.44;-20){ir}
            \Cnode[style=node](1.44;-160){il}
          }

          \begin{center}
            \begin{pspicture}[showgrid=false](-2.0,-7.5)(8.5,2.5) \SpecialCoor
              \psframe[framearc=.3,linewidth=0.05](-2.0,-2.5)(8.5,2.5)
              \psframe[framearc=.3,linewidth=0.05](-2.0,-7.5)(8.5,-3.0)
              \rput(0,0){%
                \baseh \rput(0,2.0){$x$} \ncline[style=edge]{-}{bl}{br}
                \nbput{$\downarrow \ o_x$} \ncline[style=insert]{-}{cr}{cl}
                \nbput{$i_x \ \downarrow$} \ncline[style=edge]{-}{q1r}{q1l}
                \nbput{$e_x$} }
              \rput(6.5,0){ \baseh \rput(0,2.0){$y$}
                \ncline[style=edge]{-}{bl}{br} \nbput{$\downarrow \ o_y = o_x$}
                \ncline[style=insert]{-}{cr}{cl} \nbput{$i_y = i_x \ \downarrow$}
                \ncline[style=edge]{-}{q2r}{q2l} \nbput{$e_y$} }
              \rput(0,-5.5){%
                \baseh \rput(0,2.0){$x'$} \ncline[style=edge]{-}{cr}{cl}
                \ncline[style=edge]{-}{q1r}{q1l} \nbput{$e_{x'} = e_x$} }
              \rput(6.5,-5.5){%
                \baseh \rput(0,2.0){$y'$} \ncline[style=edge]{-}{cr}{cl}
                \ncline[style=edge]{-}{q2r}{q2l} \nbput{$e_{y'} = e_y$} }
            \end{pspicture}
          \end{center}
          \caption{Case~\ref{item:EqCommon}, case~\ref{item:SmCommonUy} and
            case~\ref{item:BigCommon}, when $o_x \in C_e$.}
          \label{fig:EqCommonInx}
        \end{figure}
      \item If $o_x \in E_x \setminus \{e_x\}$ then
        select $s_y$ uniformly from $E_y \setminus \{e_y\}$. If
        $s_y \in U_y$ then set $o_y = e_y$, see Figure~\ref{fig:EqUy}. In
        this case set $U_{y'} = E_x \setminus \{e_x\}$. The
        alternative, when $s_y \notin U_y$ is considered in the next case
        (\ref{item:EqUnC}).
        \label{item:EqUy}
        \begin{figure}[tbp]
          \input{pststyles}
          \renewcommand{\baseh}{

            \pscircle[style=generic]{1.5}

            \psarc[linecolor=white,linewidth=0.2](0,0){1.5}{10}{50}
            \Cnode[style=node](1.44;10){q1r}
            \Cnode[style=node](1.44;50){q1l}

            \psarc[linecolor=white,linewidth=0.2](0,0){1.5}{130}{170}
            \Cnode[style=node](1.44;130){q2r}
            \Cnode[style=node](1.44;170){q2l}

            \psarc[linecolor=white,linewidth=0.2](0,0){1.5}{-110}{-70}
            \Cnode[style=node](1.44;-110){bl}
            \Cnode[style=node](1.44;-70){br}

            \psline[style=generic](0.5,-0.3)(1.5;-20)
            \psline[style=generic](-0.5,-0.3)(1.5;-160)

            \Cnode[style=node](0.5,-0.3){cr}
            \Cnode[style=node](-0.5,-0.3){cl}

            \Cnode[style=node](1.44;-20){ir}
            \Cnode[style=node](1.44;-160){il}
          }

          \begin{center}
            \begin{pspicture}[showgrid=false](-2.5,-7.5)(8.0,2.5) \SpecialCoor
              \psframe[framearc=.3,linewidth=0.05](-2.5,-2.5)(8.0,2.5)
              \psframe[framearc=.3,linewidth=0.05](-2.5,-7.5)(8.0,-3.0)
              \rput(0,0){%
                \baseh \rput(0,2.0){$x$} \ncline[style=edge]{-}{q2l}{q2r}
                \naput{$o_x \ \uparrow$} \ncline[style=insert]{-}{cr}{cl}
                \nbput{$i_x \ \downarrow$} \ncline[style=edge]{-}{q1r}{q1l}
                \nbput{$e_x$} }
              \rput(5.5,0){ \baseh \rput(0,2.0){$y$}
                \ncline[style=edge]{-}{bl}{br} \nbput{$o_y = e_y \ \downarrow$}
                \ncline[style=insert]{-}{cr}{cl} \nbput{$i_y = i_x \ \downarrow$}
                \ncline[style=edge]{-}{q2r}{q2l} }
              \rput(0,-5.5){%
                \baseh \rput(0,2.0){$x'$} \ncline[style=edge]{-}{cr}{cl}
                \ncline[style=edge]{-}{q1r}{q1l} \nbput{$e_{x'}=e_x$} }
              \rput(5.5,-5.5){%
                \baseh \rput(0,2.0){$y'$} \ncline[style=edge]{-}{cr}{cl}
                \ncline[style=edge]{-}{q2r}{q2l} \nbput{$e_{y'}$} }
            \end{pspicture}
          \end{center}
          \caption{Case~\ref{item:EqUy}, case~\ref{item:SmUnR} and case
            \ref{item:BigBB}, when $B'$ is true.}
          \label{fig:EqUy}
        \end{figure}
      \item If $o_x \in E_x \setminus \{e_x\}$ and $s_y \notin U_y$, then
        set $o_y = s_y$. This case is shown in Figure~\ref{fig:EqUnC}. In
        this case the distance of the coupled states increases, i.e.,
        $d(x', y') = 2$. Therefore we include a new state $z'$, in between
        $x'$ and $y'$ and define $e_{z'}$ to be the edge in
        $A_{z'} \setminus A_{x'}$; and $e_{y'}$ the edge in
        $A_{y'} \setminus A_{z'}$; and $e_{x'}$ the edge in
        $A_{x'} \setminus A_{z'}$. The set $U_{z'}$ should contain the
        edges that provide alternatives to $e_{z'}$. In this case set
        $U_{z'} = E_x \setminus \{e_x\}$ and
        $U_{y'} = (U_y \cap E_y) \setminus \{o_y\}$.
        \label{item:EqUnC}
        \begin{figure}[tbp]
          \input{pststyles}
          \renewcommand{\baseh}{

            \pscircle[style=generic]{1.5}

            \psarc[linecolor=white,linewidth=0.2](0,0){1.5}{36}{72}
            \Cnode[style=node](1.44;36){q1r}
            \Cnode[style=node](1.44;72){q1l}

            \psarc[linecolor=white,linewidth=0.2](0,0){1.5}{108}{144}
            \Cnode[style=node](1.44;108){q2r}
            \Cnode[style=node](1.44;144){q2l}

            \psarc[linecolor=white,linewidth=0.2](0,0){1.5}{216}{252}
            \Cnode[style=node](1.44;216){b1r}
            \Cnode[style=node](1.44;252){b1l}

            \psarc[linecolor=white,linewidth=0.2](0,0){1.5}{288}{324}
            \Cnode[style=node](1.44;288){b2r}
            \Cnode[style=node](1.44;324){b2l}

            \psline[style=generic](0.5,0)(1.5;0)
            \psline[style=generic](-0.5,0)(1.5;180)

            \Cnode[style=node](0.5,0){cr}
            \Cnode[style=node](-0.5,0){cl}

            \Cnode[style=node](1.44;0){ir}
            \Cnode[style=node](1.44;180){il}
          }

          \begin{center}
            \begin{pspicture}[showgrid=false](-2.0,-7.5)(11.5,2.5) \SpecialCoor
              \psframe[framearc=.3,linewidth=0.05](-2.0,-2.3)(11.5,2.5)
              \psframe[framearc=.3,linewidth=0.05](-2.0,-7.5)(11.5,-2.8)
              \rput(0,0){%
                \baseh \rput(0,2.0){$x$} \ncline[style=edge]{-}{q1r}{q1l}
                \nbput{$e_x$} \ncline[style=edge]{-}{q2r}{q2l}
                \nbput{$o_x \ \uparrow$} \ncline[style=insert]{-}{cl}{cr}
                \naput{$i_x \ \downarrow$} \ncline[style=edge]{-}{b2r}{b2l} }
              \rput(8.5,0){ \baseh \rput(0,2.0){$y$}
                \ncline[style=edge]{-}{q2r}{q2l} \ncline[style=edge]{-}{b1r}{b1l}
                \nbput{$e_y$} \ncline[style=edge]{-}{b2r}{b2l}
                \nbput{$\downarrow \ o_y = s_y$} \ncline[style=insert]{-}{cl}{cr}
                \naput{$i_y = i_x \ \downarrow$} }
              \rput(0,-5.5){%
                \baseh \rput(0,2.0){$x'$} \ncline[style=edge]{-}{q1r}{q1l}
                \nbput{$e_{x'}=e_x$} \ncline[style=edge]{-}{cl}{cr}
                \ncline[style=edge]{-}{b2r}{b2l} }
              \rput(4.5,-5.5){ \baseh \rput(0,2.0){$z'$}
                \ncline[style=edge]{-}{q2r}{q2l} \nbput{$e_{z'}$}
                \ncline[style=edge]{-}{b2r}{b2l} \ncline[style=edge]{-}{cl}{cr} }
              \rput(8.5,-5.5){%
                \baseh \rput(0,2.0){$y'$} \ncline[style=edge]{-}{q2r}{q2l}
                \ncline[style=edge]{-}{b1r}{b1l} \nbput{$e_{y'}=e_y$}
                \ncline[style=edge]{-}{cl}{cr} }
            \end{pspicture}
          \end{center}
          \caption{Case \ref{item:EqUnC}, case~\ref{item:SmUnR} and
            case~\ref{item:BigBB}.}
          \label{fig:EqUnC}
        \end{figure}
      \end{enumerate}
    \item If $|C_x| < |C_y|$ then $X_t$ will choose $o_x \in I$ with a
      higher probability then what $Y_t$ should. Therefore we use a
      Bernoulli random variable $B$ with a success probability $p$ defined
      as follows:
      \begin{equation*}
        p = \frac{C_x-1}{C_y-1}
      \end{equation*}
      In Lemma~\ref{lemma:Markovian} we prove that $p$ properly balances
      the necessary probabilities, for now note that when $|C_x| = |C_y|$
      the expression for $p$ yields $p = 1$. This is coherent with the
      following cases, because when $B$ yields \texttt{true} we use the
      choices defined for $|C_x| = |C_y|$. The following situations are
      possible:
      \label{item:Smx}
      \begin{enumerate}
      \item If $o_x = e_x$ then we reduce to the case~\ref{item:EqOeX},
        both when $B$ yields \texttt{true} or when $B$ fails and
        $s_y \in U_y$. Set $o_y = e_y$, see Figure~\ref{fig:EqOeX}.  The
        new case occurs when $B$ fails and $s_y \notin U_y$, in this
        situation set $o_y = s_y$ and
        $U_{y'} = (U_y \cap C_y) \setminus \{o_y\}$, see
        Figure~\ref{fig:SmUnC}.
        \label{item:SmUnC}
        \begin{figure}[tbp]
          \input{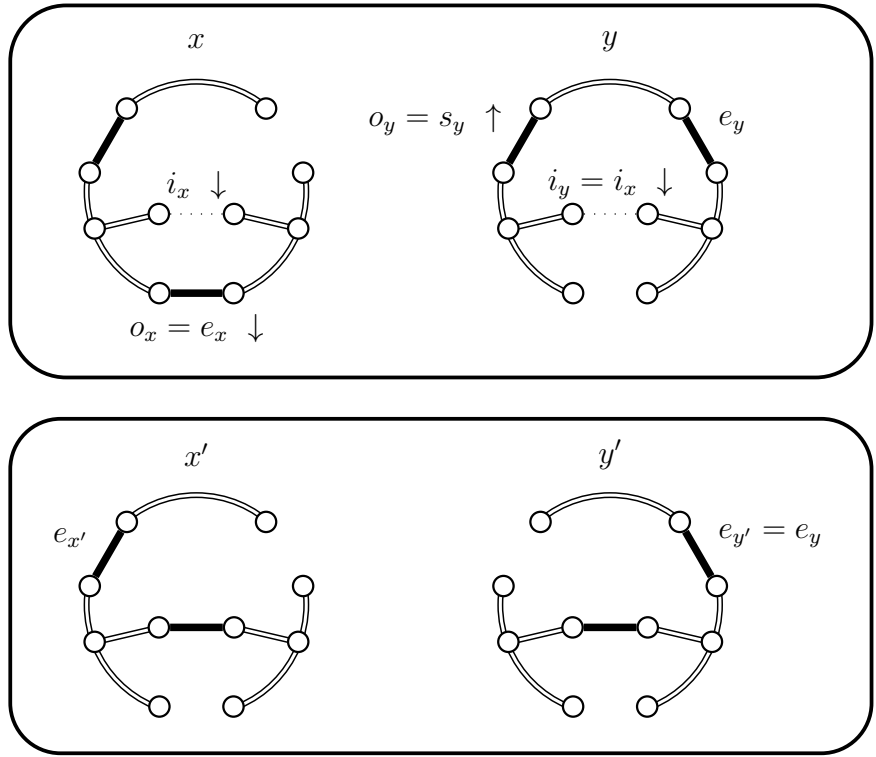}
          \renewcommand{\baseh}{

            \pscircle[style=generic]{1.5}

            \psarc[linecolor=white,linewidth=0.2](0,0){1.5}{10}{50}
            \Cnode[style=node](1.44;10){q1r}
            \Cnode[style=node](1.44;50){q1l}

            \psarc[linecolor=white,linewidth=0.2](0,0){1.5}{130}{170}
            \Cnode[style=node](1.44;130){q2r}
            \Cnode[style=node](1.44;170){q2l}

            \psarc[linecolor=white,linewidth=0.2](0,0){1.5}{-110}{-70}
            \Cnode[style=node](1.44;-110){bl}
            \Cnode[style=node](1.44;-70){br}

            \psline[style=generic](0.5,-0.3)(1.5;-20)
            \psline[style=generic](-0.5,-0.3)(1.5;-160)

            \Cnode[style=node](0.5,-0.3){cr}
            \Cnode[style=node](-0.5,-0.3){cl}

            \Cnode[style=node](1.44;-20){ir}
            \Cnode[style=node](1.44;-160){il}
          }

          \begin{center}
            \begin{pspicture}[showgrid=false](-2.5,-7.5)(9.0,2.5) \SpecialCoor
              \psframe[framearc=.3,linewidth=0.05](-2.5,-2.5)(9.0,2.5)
              \psframe[framearc=.3,linewidth=0.05](-2.5,-7.5)(9.0,-3.0)
              \rput(0,0){%
                \baseh \rput(0,2.0){$x$} \ncline[style=edge]{-}{bl}{br}
                \nbput{$o_x = e_x \ \downarrow$} \ncline[style=insert]{-}{cr}{cl}
                \nbput{$i_x \ \downarrow$} \ncline[style=edge]{-}{q2r}{q2l} }
              \rput(5.5,0){ \baseh \rput(0,2.0){$y$}
                \ncline[style=edge]{-}{q2l}{q2r} \naput{$o_y = s_y \ \uparrow$}
                \ncline[style=insert]{-}{cr}{cl} \nbput{$i_y = i_x\ \downarrow$}
                \ncline[style=edge]{-}{q1r}{q1l} \nbput{$e_y$} }
              \rput(0,-5.5){%
                \baseh \rput(0,2.0){$x'$} \ncline[style=edge]{-}{cr}{cl}
                \ncline[style=edge]{-}{q2r}{q2l} \nbput{$e_{x'}$} }
              \rput(5.5,-5.5){%
                \baseh \rput(0,2.0){$y'$} \ncline[style=edge]{-}{cr}{cl}
                \ncline[style=edge]{-}{q1r}{q1l} \nbput{$e_{y'} = e_y$} }
            \end{pspicture}
          \end{center}

          \caption{Case~\ref{item:SmUnC} when $B$ fails and $s_y \notin U_y$.}
          \label{fig:SmUnC}
        \end{figure}
      \item If $o_x \in I \setminus \{i_x\}$ then we reduce to the
        case~\ref{item:EqCommon} when $B$ yields \texttt{true}. Set
        $o_y=o_x$, see
        Figures~\ref{fig:EqCommonNix}~and~\ref{fig:EqCommonInx}. When $B$
        fails and $s_y \in U_y$ we have a new situation. Set $o_y = e_y$
        and $U_{y'} = I \setminus \{i_x\}$.  The chains
        preserve their distance, i.e., $d(x',y') = 1$, see
        Figure~\ref{fig:SmCommonUy}.
        The alternative, when $s_y \notin U_y$ is considered in the next
        case (\ref{item:SmComUncommon}).
        \label{item:SmCommonUy}
        \begin{figure}[tbp]
          \input{pststyles}
          \renewcommand{\baseh}{
            \pscircle[style=generic]{1}

            \psarc[linecolor=white,linewidth=0.2](0,0){1}{10}{50}
            \Cnode[style=node](0.94;10){q1r}
            \Cnode[style=node](0.94;50){q1l}

            \psarc[linecolor=white,linewidth=0.2](0,0){1}{130}{170}
            \Cnode[style=node](0.94;130){q2r}
            \Cnode[style=node](0.94;170){q2l}

            \psarc[linecolor=white,linewidth=0.2](0,0){1}{-110}{-70}
            \Cnode[style=node](0.94;-110){bl}
            \Cnode[style=node](0.94;-70){br}

            \psline[style=generic](0.35,-0.3)(1.0;-20)
            \psline[style=generic](-0.35,-0.3)(1.0;-160)

            \Cnode[style=node](0.35,-0.3){cr}
            \Cnode[style=node](-0.35,-0.3){cl}

            \Cnode[style=node](0.94;-20){ir}
            \Cnode[style=node](0.94;-160){il}
          }

          \begin{center}
            \begin{pspicture}[showgrid=false](-2.0,-6.2)(7.5,2.0) \SpecialCoor
              \psframe[framearc=.3,linewidth=0.05](-2.0,-2.0)(7.5,2.0)
              \psframe[framearc=.3,linewidth=0.05](-2.0,-6.2)(7.5,-2.5)
              \rput(0,0){%
                \baseh \rput(0,1.5){$x$} \ncline[style=edge]{-}{q1l}{q1r}
                \naput{$\uparrow \ o_x$} \ncline[style=insert]{-}{q2r}{q2l}
                \nbput{$i_x \ \downarrow$} \ncline[style=edge]{-}{cl}{cr}
                \naput{$e_x$} }
              \rput(5.3,0){ \baseh \rput(0,1.5){$y$}
                \ncline[style=edge]{-}{q1l}{q1r}
                \ncline[style=insert]{-}{q2r}{q2l}
                \nbput{$i_y = i_x \ \downarrow$} \ncline[style=edge]{-}{bl}{br}
                \nbput{$o_y = e_y \ \downarrow$} }
              \rput(0,-4.5){%
                \baseh \rput(0,1.5){$x'$} \ncline[style=edge]{-}{q2r}{q2l}
                \ncline[style=edge]{-}{cl}{cr} \naput{$e_{x'}=e_x$} }
              \rput(5.3,-4.5){%
                \baseh \rput(0,1.5){$y'$} \ncline[style=edge]{-}{q1l}{q1r}
                \naput{$e_{y'}$} \ncline[style=edge]{-}{q2r}{q2l} }
            \end{pspicture}
          \end{center}
          \caption{Case \ref{item:SmCommonUy}.}
          \label{fig:SmCommonUy}
        \end{figure}
      \item If $o_x \in I \setminus \{i_x\}$ and $B$ fails and
        $s_y \notin U_y$.  We have a new situation, set $o_y = s_y$.  The
        distance increases, $d(x',y') = 2$, see
        Figure~\ref{fig:SmComUncommon}. Set
        $U_{z'} = I \setminus \{i_x\}$ and
        $U_{y'} = (U_y \cap E_y) \setminus \{o_y\}$.
        \label{item:SmComUncommon}
        \begin{figure}[tbp]
          \input{pststyles}
          \renewcommand{\baseh}{
            \pscircle[style=generic]{1.5}

            \psarc[linecolor=white,linewidth=0.2](0,0){1.5}{36}{72}
            \Cnode[style=node](1.44;36){q1r}
            \Cnode[style=node](1.44;72){q1l}

            \psarc[linecolor=white,linewidth=0.2](0,0){1.5}{108}{144}
            \Cnode[style=node](1.44;108){q2r}
            \Cnode[style=node](1.44;144){q2l}

            \psarc[linecolor=white,linewidth=0.2](0,0){1.5}{216}{252}
            \Cnode[style=node](1.44;216){b1r}
            \Cnode[style=node](1.44;252){b1l}

            \psarc[linecolor=white,linewidth=0.2](0,0){1.5}{288}{324}
            \Cnode[style=node](1.44;288){b2r}
            \Cnode[style=node](1.44;324){b2l}

            \psline[style=generic](0.5,0)(1.5;0)
            \psline[style=generic](-0.5,0)(1.5;180)

            \Cnode[style=node](0.5,0){cr}
            \Cnode[style=node](-0.5,0){cl}

            \Cnode[style=node](1.44;0){ir}
            \Cnode[style=node](1.44;180){il}
          }

          \begin{center}
            \begin{pspicture}[showgrid=false](-2.0,-7.5)(11.0,2.5) \SpecialCoor
              \psframe[framearc=.3,linewidth=0.05](-2.0,-2.3)(11.0,2.5)
              \psframe[framearc=.3,linewidth=0.05](-2.0,-7.5)(11.0,-2.8)
              \rput(0,0){%
                \baseh \rput(0,2.0){$x$} \ncline[style=edge]{-}{q1r}{q1l}
                \nbput{$\uparrow \ o_x$} \ncline[style=insert]{-}{q2r}{q2l}
                \nbput{$i_x \ \downarrow$} \ncline[style=edge]{-}{cl}{cr}
                \nbput{$e_x$} \ncline[style=edge]{-}{b1r}{b1l} }
              \rput(8.0,0){ \baseh \rput(0,2.0){$y$}
                \ncline[style=edge]{-}{q1r}{q1l}
                \ncline[style=insert]{-}{q2r}{q2l}
                \nbput{$i_y = i_x \ \downarrow$} \ncline[style=edge]{-}{b2r}{b2l}
                \nbput{$e_y$} \ncline[style=edge]{-}{b1r}{b1l}
                \nbput{$o_y = s_y \ \downarrow$} }
              \rput(0,-5.5){%
                \baseh \rput(0,2.0){$x'$} \ncline[style=edge]{-}{q2r}{q2l}
                \ncline[style=edge]{-}{cl}{cr} \nbput{$e_{x'}=e_x$}
                \ncline[style=edge]{-}{b1r}{b1l} }
              \rput(4.0,-5.5){ \baseh \rput(0,2.0){$z'$}
                \ncline[style=edge]{-}{q2r}{q2l} \ncline[style=edge]{-}{q1r}{q1l}
                \nbput{$e_{z'}$} \ncline[style=edge]{-}{b1r}{b1l} }
              \rput(8.0,-5.5){%
                \baseh \rput(0,2.0){$y'$} \ncline[style=edge]{-}{q1r}{q1l}
                \ncline[style=edge]{-}{q2r}{q2l} \ncline[style=edge]{-}{b2r}{b2l}
                \nbput{$e_{y'}=e_y$} }
            \end{pspicture}
          \end{center}
          \caption{Case \ref{item:SmComUncommon}.}
          \label{fig:SmComUncommon}
        \end{figure}
      \item If $o_x \in E_x \setminus \{e_x\}$ then if $s_y \in U_y$ use
        case~\ref{item:EqUy} (Figure~\ref{fig:EqUy}), otherwise, when
        $s_y \notin U_y$, use case~\ref{item:EqUnC}
        (Figure~\ref{fig:EqUnC}).
        \label{item:SmUnR}
      \end{enumerate}
    \item If $|C_x| > |C_y|$ we have the following situations:
      \label{item:Bigx}
      \begin{enumerate}
      \item If $o_x = e_x$ then use case~\ref{item:EqOeX} and set
        $o_y = e_y$, see Figure~\ref{fig:EqOeX}. The chains coalesce.
        \label{item:BigEx}
      \item If $o_x \in I \setminus \{i_x\}$ then use
        case~\ref{item:EqCommon} and set $o_y = o_x$, see
        Figures~\ref{fig:EqCommonNix}~and~\ref{fig:EqCommonInx}.
        \label{item:BigCommon}
      \item If $o_x \in E_x \setminus \{e_x\}$ then we use a new Bernoulli
        random variable $B^*$ with a success probability $p^*$ defined as
        follows:
        \begin{equation*}
          p^* = \left( \frac{1}{C_y-1} - \frac{1}{C_x-1}\right) \times \frac{(C_x-1)(I-1)}{E_x-1}
        \end{equation*}
        In Lemma~\ref{lemma:Markovian} we prove that $B^*$ properly
        balances the necessary probabilities. For now, note that when
        $|C_x| = |C_y|$ the expression for $p^*$ yields $p^* = 0$, because
        \mbox{$1/(C_y-1) - 1/(C_x-1)$} becomes $0$. This is coherent because
        when $B^*$ returns \texttt{false} we will use the choices defined
        for $|C_x| = |C_y|$.  The case when $B^*$ fails is considered in
        the next case (\ref{item:BigBB}).

        If $B^*$ is successful we have a new situation. Set $o_y = s_i$,
        where $s_i$ is chosen uniformly from \mbox{$I \setminus \{i_y\}$}
        and see Figure~\ref{fig:BigUn}. We have $e_{y'} = e_y$,
        $U_{y'} = (U_y \cap E_y) \setminus \{o_y\}$, $e_{z'} = o_x$ and
        $U_{z'} = E_x \setminus \{e_x\}$.
        \label{item:BigUn}
        \begin{figure}[tbp]
          \input{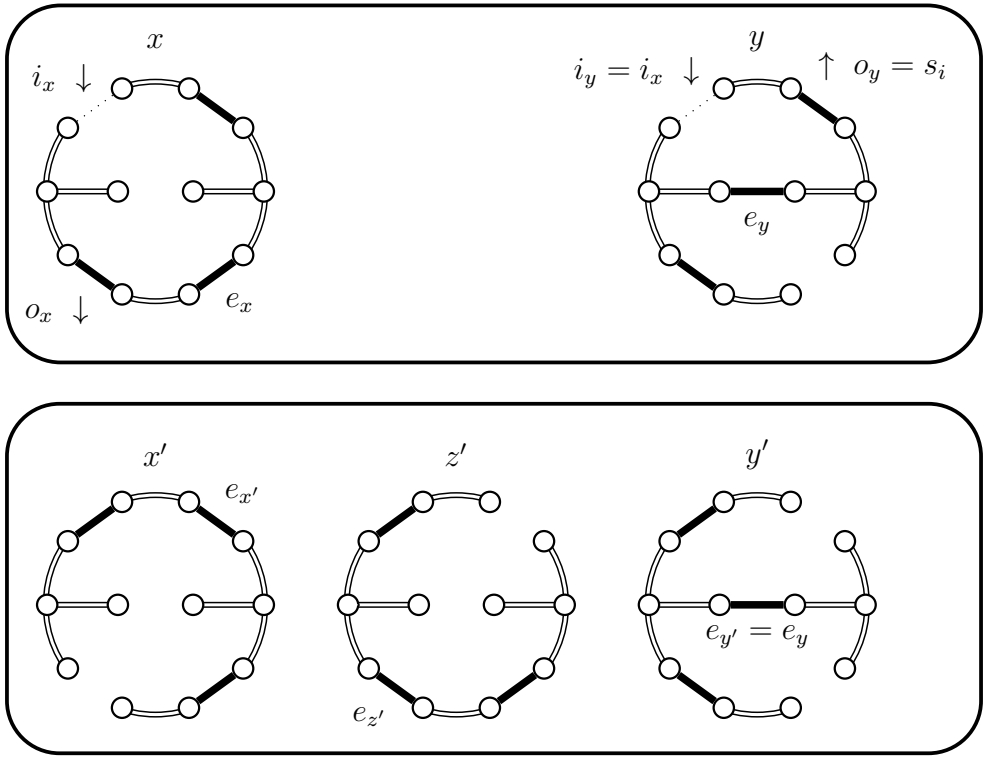}
          \renewcommand{\baseh}{
            \pscircle[style=generic]{1.5}

            \psarc[linecolor=white,linewidth=0.2](0,0){1.5}{36}{72}
            \Cnode[style=node](1.44;36){q1r}
            \Cnode[style=node](1.44;72){q1l}

            \psarc[linecolor=white,linewidth=0.2](0,0){1.5}{108}{144}
            \Cnode[style=node](1.44;108){q2r}
            \Cnode[style=node](1.44;144){q2l}

            \psarc[linecolor=white,linewidth=0.2](0,0){1.5}{216}{252}
            \Cnode[style=node](1.44;216){b1r}
            \Cnode[style=node](1.44;252){b1l}

            \psarc[linecolor=white,linewidth=0.2](0,0){1.5}{288}{324}
            \Cnode[style=node](1.44;288){b2r}
            \Cnode[style=node](1.44;324){b2l}

            \psline[style=generic](0.5,0)(1.5;0)
            \psline[style=generic](-0.5,0)(1.5;180)

            \Cnode[style=node](0.5,0){cr}
            \Cnode[style=node](-0.5,0){cl}

            \Cnode[style=node](1.44;0){ir}
            \Cnode[style=node](1.44;180){il}
          }

          \begin{center}
            \begin{pspicture}[showgrid=false](-2.0,-7.5)(11.0,2.5) \SpecialCoor
              \psframe[framearc=.3,linewidth=0.05](-2.0,-2.3)(11.0,2.5)
              \psframe[framearc=.3,linewidth=0.05](-2.0,-7.5)(11.0,-2.8)
              \rput(0,0){%
                \baseh \rput(0,2.0){$x$} \ncline[style=edge]{-}{q1r}{q1l}
                \ncline[style=edge]{-}{b2r}{b2l} \nbput{$e_x$}
                \ncline[style=insert]{-}{q2r}{q2l} \nbput{$i_x \ \downarrow$}
                \ncline[style=edge]{-}{b1r}{b1l} \nbput{$o_x \ \downarrow$} }
              \rput(8.0,0){ \baseh \rput(0,2.0){$y$}
                \ncline[style=edge]{-}{b1r}{b1l}
                \ncline[style=insert]{-}{q2r}{q2l}
                \nbput{$i_y = i_x \ \downarrow$} \ncline[style=edge]{-}{cl}{cr}
                \nbput{$e_y$} \ncline[style=edge]{-}{q1r}{q1l}
                \nbput{$\uparrow \ o_y = s_i $} }
              \rput(0,-5.5){%
                \baseh \rput(0,2.0){$x'$} \ncline[style=edge]{-}{q1r}{q1l}
                \nbput{$e_{x'}$} \ncline[style=edge]{-}{b2r}{b2l}
                \ncline[style=edge]{-}{q2r}{q2l} }
              \rput(4.0,-5.5){ \baseh \rput(0,2.0){$z'$}
                \ncline[style=edge]{-}{b2r}{b2l} \ncline[style=edge]{-}{q2r}{q2l}
                \ncline[style=edge]{-}{b1r}{b1l} \nbput{$e_{z'}$} }
              \rput(8.0,-5.5){%
                \baseh \rput(0,2.0){$y'$} \ncline[style=edge]{-}{b1r}{b1l}
                \ncline[style=edge]{-}{q2r}{q2l} \ncline[style=edge]{-}{cl}{cr}
                \nbput{$e_{y'}=e_y$} }
            \end{pspicture}
          \end{center}
          \caption{Case \ref{item:BigUn}, when $B^*$ is true.}
          \label{fig:BigUn}
        \end{figure}
      \item If $o_x \in E_x \setminus \{e_x\}$ and $B^*$ fails we use
        another Bernoulli random variable $B'$ with a success probability
        $p'$ defined as follows:
        \begin{equation*}
          p' = 1 - \frac{(C_x-1)(E_y-1)}{(C_y-1)(E_x-1)(1-p^*)}
        \end{equation*}
        In Lemma~\ref{lemma:Markovian} we prove that $B'$ properly balances
        the necessary probabilities. In case $B'$ yields \texttt{true}, use
        case~\ref{item:EqUy} and set $o_y = e_y$, see
        Figure~\ref{fig:EqUy}. Otherwise, if $s_y \in U_y$, use
        case~\ref{item:EqUy} (Figure~\ref{fig:EqUy}) or, if
        $s_y \notin U_y$, use case~\ref{item:EqUnC} (Figure~\ref{fig:EqUnC}).
        \label{item:BigBB}
      \end{enumerate}
    \end{enumerate}
  \end{enumerate}

  Notice that the case~\ref{item:EqCommon} applies when $C_x = C_y$, thus
  solving this situation as a particular case. This case is shown in
  Figure~\ref{fig:EqCommonNix}. It may even be the case that $C_x = C_y$ and
  $C_x$ and $C_e$ are disjoint, i.e., $C_x \cap C_e = \emptyset$. This case
  is not drawn.

  Formally, a coupling is Markovian when Equations~(\ref{eq:2})
  and~(\ref{eq:3}) hold, where $Z_t$ is the coupling, which is defined as a
  pair of chains $(X_t, Y_t)$. The chain $M_t$ represents the original
  chain.
\begin{subequations}
  \begin{align}
    \Pr (X_{t+1} = x'\ |\ Z_t=(x,y)) &= \Pr (M_{t+1} = x'\ |\
                                       M_t=x) \label{eq:2}
    \\
    \Pr (Y_{t+1} = y'\ |\ Z_t=(x,y)) &= \Pr (M_{t+1} = y'\ |\
                                       M_t=y) \label{eq:3}
  \end{align}
\end{subequations}
To establish vital insight into the coupling structure we will start by
studying it when it is Markovian.
\begin{lemma}
\label{lemma:Markovian}
  When $U_y = \{e_y\}$ the process we described is a Markovian coupling.
\end{lemma}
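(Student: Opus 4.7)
My plan is to verify the two marginal identities (\ref{eq:2}) and (\ref{eq:3}) directly. Equation (\ref{eq:2}) is immediate from the construction of $X_t$, since $i_x$ is drawn uniformly from $E$ and $o_x$ uniformly from $C_x \setminus \{i_x\}$ without ever consulting $Y_t$. The substantive content of the lemma is therefore (\ref{eq:3}). Before starting the case analysis, I would use the hypothesis to simplify: since $U_y = \{e_y\}$ and $s_y$ is chosen uniformly from $E_y \setminus \{e_y\}$, we always have $s_y \notin U_y$. This removes every sub-case guarded by ``$s_y \in U_y$'' (in particular the conditional $o_y = e_y$ branches of cases \ref{item:EqUy}, \ref{item:SmUnC}, \ref{item:SmCommonUy} and \ref{item:BigBB}), collapsing the analysis to a handful of simultaneous draws.

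Next I would establish that $i_y$ is uniform on $E$. Cases \ref{item:swap} and \ref{item:ExICol} are exactly the cases where $i_x = e_y$, and in both we set $i_y = e_x$; in every other non-loop case we set $i_y = i_x$. The map $i_x \mapsto i_y$ is thus the transposition of $e_x$ and $e_y$ on $E$, so uniformity of $i_x$ transfers to $i_y$. It then remains to show that, conditional on $i_y = i_x$, the edge $o_y$ is uniform on $C_y \setminus \{i_y\}$. Using Lemma~\ref{lemma:CePartition}, I would partition $C_y \setminus \{i_y\}$ into $\{e_y\}$, $I \setminus \{i_x\}$ and $E_y \setminus \{e_y\}$ and compute $\Pr(o_y = o)$ for $o$ in each part.

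For $|C_x| = |C_y|$ (case~\ref{item:Eqs}) the identity is immediate once one observes that $|E_x| = |E_y|$, since each branch of the construction mimics the corresponding branch of $X_t$. For $|C_x| < |C_y|$ (case~\ref{item:Smx}) the Bernoulli $B$ with $p = (C_x-1)/(C_y-1)$ is inserted precisely so that $\Pr(o_y = e_y) = p/(C_x-1) = 1/(C_y-1)$ and $\Pr(o_y = o) = p/(C_x-1) = 1/(C_y-1)$ for $o \in I\setminus\{i_x\}$, and so that the residual mass $(1-p)$ coming from $o_x = e_x$ and $o_x \in I \setminus \{i_x\}$, together with the full mass from $o_x \in E_x \setminus \{e_x\}$, redistribute uniformly on $E_y \setminus \{e_y\}$ through $s_y$. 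The identity to check is
\begin{equation*}
\frac{1}{(C_x-1)(E_y-1)}\bigl[I(1-p) + (E_x - 1)\bigr] = \frac{1}{C_y - 1},
\end{equation*}
which follows at once from $C_x = I + E_x$ and $C_y = I + E_y$.

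For $|C_x| > |C_y|$ (case~\ref{item:Bigx}) the two nested Bernoulli trials $B^*$ and $B'$ are designed to reroute the excess mass that $X_t$ places on $E_x \setminus \{e_x\}$: a fraction $p^*$ is sent to $I \setminus \{i_y\}$ via $s_i$, and a further fraction $p'$ of the remainder is sent to $e_y$. Imposing $\Pr(o_y = o) = 1/(C_y - 1)$ for $o \in I \setminus \{i_y\}$ yields the formula for $p^*$, and then imposing
\begin{equation*}
\frac{1}{C_x-1} + \frac{E_x - 1}{C_x - 1}(1-p^*)p' = \frac{1}{C_y - 1}
\end{equation*}
produces exactly the stated formula for $p'$; uniformity on $E_y \setminus \{e_y\}$ then drops out because whatever mass survives all the Bernoullis is spread uniformly via $s_y$. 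The main obstacle is this last case: it requires careful bookkeeping of five independent probability contributions, together with checking $p^*, p' \in [0,1]$ (which is where the hypothesis $|C_x| > |C_y|$ is genuinely used), and the algebra is long enough that it is easy to miscount the contribution of the branches into which $s_y$ is drawn.
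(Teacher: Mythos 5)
Your proposal is correct and follows essentially the same route as the paper's proof: verify the two marginals, use the hypothesis $U_y=\{e_y\}$ to eliminate every branch guarded by $s_y\in U_y$, establish uniformity of $i_y$ (your transposition observation is a cleaner packaging of the paper's three-way case split), and then verify $\Pr(o_y=o)=1/(C_y-1)$ by decomposing $C_y\setminus\{i_y\}$ into $\{e_y\}$, $I\setminus\{i_x\}$ and $E_y\setminus\{e_y\}$ across the three size regimes; your identities for the $|C_x|<|C_y|$ and $|C_x|>|C_y|$ regimes match the paper's computations, and your appeal to conservation of mass for the final uniformity on $E_y\setminus\{e_y\}$ is a legitimate shortcut for the paper's explicit algebra. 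One small scope correction: after the $i_y$ step it does not only ``remain to treat $i_y=i_x$'' --- you must also check the $o_y$ marginal when $i_x=e_y$ (cases \ref{item:swap} and \ref{item:ExICol}), where $i_y=e_x$ and $C_x=C_y=C_e$; there the same transposition argument ($o_x=e_x\mapsto o_y=e_y$, identity otherwise) gives uniformity on $C_y\setminus\{e_x\}$, which is exactly the first case of the paper's $o_y$ analysis.
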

\begin{proof}
  The coupling verifies Equation~\eqref{eq:2}, because we do not alter the
  behavior of the chain $X_t$. Hence the main part of the proof focus on
  Equation~\eqref{eq:3}.

  First let us prove that for any edge $i \in E$ the probability that
  $i_y = i$ is $1/E$, i.e., $\Pr(i_y = i) = 1/E$. The possibilities for
  $i_y$ are the following:
  \begin{itemize}
  \item $i \in A_y$, this occurs only in case~\ref{item:loop}, when
    $i_x \in A_x$. It may be that $i = e_y$, this occurs when $i_x = e_x$
    in which case $i_y = e_y = i$ and this is the only case where $i_y =
    e_y$. In this case $\Pr(i_y = i) = \Pr(i_x = e_x) = 1/E$. Otherwise
    $i \in A_y \cap A_x$, in these cases $i_y = i_x$ and therefore
    $\Pr(i_y = i) = \Pr(i_x = i) = 1/E$.
  \item $i = e_x$, this occurs in
    cases~\ref{item:swap}~and~\ref{item:ExICol}, i.e., when $i_x = e_y$,
    which is the decisive condition for this choice. Therefore
    $\Pr(i_y = i) = \Pr(i_x = e_y) = 1/E$.
  \item $i \in E \setminus A_y$, this occurs in case~\ref{item:GenI}. In
    this case $i_y = i_x$ so again we have that
    $\Pr(i_y = i) = \Pr(i_x = i) = 1/E$.
  \end{itemize}
  Before focusing on $o_x$ we will prove that the Bernoulli random variables
  are well defined, i.e., that the expression on the denominators are not
  $0$ and that the values of $p$, $p^*$, $p'$ are between $0$ and $1$.
  \begin{itemize}
  \item Analysis of $B$. We need to have $C_y-1 \neq 0$ for $p$ to be well
    defined. Any cycle must contain at least $3$ edges, therefore
    $3 \leq C_y$ and hence $0 < 2 \leq C_y -1$. This guarantees that the
    denominator is not $0$. The same argument proves that $0 < C_x -1$,
    thus implying that $0 < p$, as both expressions are positive. We
    also establish that $p < 1$ because of the hypothesis of
    case~\ref{item:Smx} which guarantees $C_x < C_y$ and therefore
    $C_x - 1 < C_y - 1$.
  \item Analysis of $B^*$.  As in seen the analysis of $B$ we have that
    $0 < C_y-1$ and $0 < C_x-1$, therefore those denominators are not
    $0$. Moreover we also need to prove that $E_x -1 \neq 0$. In general we
    have that $1 \leq E_y$, because $e_y \in E_y$. Moreover, the hypothesis
    of case~\ref{item:BigUn} is that $C_y < C_x$ and therefore $E_y < E_x$,
    obtained by removing $I$ from the both sides. This implies that
    $1 < E_x$ and therefore $0 < E_x -1$, thus establishing that the last
    denominator is also not $0$.

    Let us now establish that $0 \leq p^*$ and $p^* < 1$. Note that $p^*$
    can be simplified to the expression
    \mbox{$(C_x-C_y)(I-1)/((C_y-1)(E_x-1))$}, where all the expressions in
    parenthesis are non-negative, so $0 \leq p^*$. For the second property
    we use the new expression for $p^*$ and simplify $p^* < 1$ to
    $(E_x - E_y)(I-1) < (E_x-1)(C_y-1)$. The deduction is straightforward
    using the equality $C_x - C_y = E_x - E_y$ that is obtained by removing
    $I$ from the left side. The properties $E_x - E_y \leq E_x - 1$ and
    $I - 1 < C_y - 1$ establish the desired result.
  \item Analysis of $B'$. We established, in the analysis of $B$, that
    $C_y-1$ is non-zero. In the analysis of $B^*$ we also established
    that $E_x-1$ is non-zero, note that case~\ref{item:BigBB} also assumes
    the hypothesis that $C_y < C_x$. Moreover in the analysis of $B^*$ we
    also established that $p^* < 1$, which implies that $0 < 1- p^*$ and
    therefore the last denominator is also non-zero.

    Let us also establish that $0 \leq p'$ and $p' \leq 1$. For the
    second property we instead prove that $0 \leq 1-p'$, where
    $1-p' = (C_x-1)(E_y-1)/((C_y-1)(E_x-1)(1-p^*))$ and all of the
    expressions in parenthesis are non-negative. We use the following
    deduction of equivalent inequalities to establish that $0 \leq p'$:
    \begin{align*}
      0 & \leq p' \\
      -p' & \leq 0 \\
      1-p' & \leq 1 \\
      (C_x-1)(E_y-1) &\leq (C_y-1)(E_x-1)(1-p^*) \\
      (C_x-1)(E_y-1) &\leq (C_y-1)(E_x-1)\left(1-
                       \frac{(C_x-C_y)(I-1)}{(C_y-1)(E_x-1)}\right) \\
      (C_x-1)(E_y-1) &\leq (C_y-1)(E_x-1)-(C_x-C_y)(I-1) \\
      (E_x-1)(E_y-1) + I(E_y-1) &\leq (E_y-1)(E_x-1)+ I(E_x-1) -
                                  (E_x-E_y)(I-1) \\
      I(E_y-1) &\leq  I(E_x-1) - (E_x-E_y)(I-1) \\
      I((E_y-1) + E_x - E_y) &\leq  I(E_x-1) + E_x-E_y \\
      I(E_x-1) + E_y &\leq  I(E_x-1) + E_x \\
      E_y &\leq E_x \\
      C_y &\leq C_x
    \end{align*}
    This last inequality is part of the hypothesis of
    case~\ref{item:BigBB}.

  \end{itemize}
  Now let us focus on the edge $o_x$. We wish to establish that for any
  $o \in C_y \setminus \{ i_y \}$ we have that $\Pr(o_y = o) =
  1/(C_y-1)$. We analyse this edge according to the following cases:
  \begin{enumerate}
  \item When the cycles are equal $C_x = C_y$. This involves
    cases~\ref{item:swap}~and~\ref{item:ExICol}.
    \begin{itemize}
    \item $o = e_y$, this occurs only in case~\ref{item:swap} and it is
      determined by the fact that $o_x = e_x$, therefore $\Pr(o_y = o) =
      \Pr(o_x = e_x) = 1 /(C_x - 1) = 1/(C_y -1)$.
    \item $o \neq e_y$, this occurs only in case~\ref{item:ExICol} and it
      is determined by the fact that $o_x \neq e_x$, in this case
      $o_y = o_x$. Therefore
      $\Pr(o_y = o) = \Pr(o_x = o) = 1 /(C_x - 1) = 1/(C_y -1)$.
    \end{itemize}
  \item When the cycles have the same size $|C_x| = |C_y|$,
    case~\ref{item:Eqs}. The possibilities for $o$ are the following:
    \begin{itemize}
    \item $o = e_y$, this occurs only in the case~\ref{item:EqOeX}. This
      case is determined by the fact that $o_x = e_x$. Therefore
      $\Pr(o_y = o) = \Pr(o_x = e_x) = 1 /(C_x - 1) = 1/(C_y -1)$.  Note
      that according to the Lemma's hypothesis, case~\ref{item:EqUy} never
      occurs.
    \item $o \in I \setminus \{i_y\}$, this occurs only in
      case~\ref{item:EqCommon}. This case is determined by the fact that
      $o_x \in I \setminus \{ i_x\}$ and sets $o_y = o_x = o$. Therefore
      $\Pr(o_y = o) = \Pr(o_x = o) = 1 /(C_x - 1) = 1/(C_y -1)$.
    \item $o \in E_y \setminus \{e_y\}$, this occurs only in
      case~\ref{item:EqUnC}. This case is determined by the fact that
      $o_x \in X \setminus \{e_x\}$ and moreover sets $o_y = s_y$, which
      was uniformly selected from $E_y \setminus \{e_y\}$. We have the
      following deduction where we use the fact that the events are
      independent and that $|C_x| = |C_y|$ implies $|E_x| = |E_y|$:
      \begin{align*}
        \Pr(o_y = o) & =  \Pr(o_x \in E_x \setminus \{e_x\} \mbox{ and }
                       s_y = o) \\
                     & = \Pr(o_x \in E_x \setminus \{e_x\}) \Pr(s_y = o) \\
                     & = \frac{E_x-1}{C_x - 1} \times \frac{1}{E_y-1}  \\
                     & = 1/(C_x - 1) \\
                     & = 1/(C_y - 1)
      \end{align*}

    \end{itemize}
  \item When $C_x < C_y$ this involves case~\ref{item:Smx}. The cases
    for $o$ are the following:
     \begin{itemize}
     \item $o = e_y$, this occurs only in the case~\ref{item:SmUnC} and
       when $B$ is \texttt{true}. This case occurrs when $o_x = e_x$.
        We make the following deduction, that uses the fact that the events
        are independent and the success probability of $B$:
       \begin{align*}
        \Pr(o_y = o) & =  \Pr(o_x = e_x \mbox{ and }
                       B = \mathtt{true}) \\
                     & = \Pr(o_x = e_x) \Pr(B = \mathtt{true}) \\
                     & = \frac{1}{C_x - 1} \times \frac{C_x-1}{C_y-1}  \\
                     & = 1/(C_y - 1)
      \end{align*}
    \item $o \in I \setminus \{i_y\}$, this occurs only in
      case~\ref{item:SmCommonUy} and when $B$ is \texttt{true}. This case
      is determined by the fact that $o_x \in I \setminus \{ i_x\}$ and
      sets $o_y = o_x = o$.  We make the following deduction, that uses the
      fact that the events are independent and the success probability of
      $B$:
       \begin{align*}
        \Pr(o_y = o) & =  \Pr(o_x = o \mbox{ and }
                       B = \mathtt{true}) \\
                     & = \Pr(o_x = o) \Pr(B = \mathtt{true}) \\
                     & = \frac{1}{C_x - 1} \times \frac{C_x-1}{C_y-1}  \\
                     & = 1/(C_y - 1)
      \end{align*}
    \item $o \in E_y \setminus \{e_y\}$, this occurs in
      case~\ref{item:SmUnR}, but also in cases~\ref{item:SmComUncommon}
      and~\ref{item:SmUnC} when $B$ is \texttt{false}. We have the
      following deduction, that uses event independence, the fact that the
      cases are disjoint events and the succes probability of $B$:
      \begin{align*}
        \Pr&(o_y = o) & \\
&        = \Pr( \mbox{\ref{item:SmUnR}
        or~(\ref{item:SmComUncommon} and $B = \mathtt{false}$) or~(\ref{item:SmUnC} and $B = \mathtt{false}$)}) \\
&        = \Pr( \mbox{\ref{item:SmUnR}})
        + \Pr(\mbox{\ref{item:SmComUncommon}} \mbox{ and } B = \mathtt{false}) +
        \Pr(\mbox{\ref{item:SmUnC}} \mbox{ and } B = \mathtt{false}) \\
&      = \Pr(o_x \in E_x \setminus \{e_x\} \mbox{ and }
       s_y = o) + \Pr(o_x \in I \mbox{ and } B =
       \mathtt{false} \mbox{ and } s_y = o) \\
& \quad      + \Pr(o_x = e_x \mbox{ and } B =
       \mathtt{false} \mbox{ and } s_y = o) \\
&      = \Pr(o_x \in E_x \setminus \{e_x\}) \Pr(s_y = o) +
       \Pr(o_x \in I) \Pr(B = \mathtt{false})\Pr(s_y = o) \\
& \quad     + \Pr(o_x = e_x)\Pr(B = \mathtt{false})\Pr(s_y = o) \\
&      = \Pr(o_x \in E_x \setminus \{e_x\}) \Pr(s_y = o) +
       \Pr(o_x \in I \cup \{e_x\}) \Pr(B = \mathtt{false})\Pr(s_y = o) \\
&      = [ \Pr(o_x \in E_x \setminus \{e_x\}) +
       \Pr(o_x \in I \cup \{e_x\}) (1 - \Pr(B =
       \mathtt{true})) ]\Pr(s_y = o) \\
&      = [\Pr(o_x \in C_x) -
       \Pr(o_x \in I \cup \{e_x\}) \Pr(B =
       \mathtt{true}) ] \Pr(s_y = o) \\
&      = [ 1 -
       \Pr(o_x \in I \cup \{e_x\}) \Pr(B =
       \mathtt{true}) ] \Pr(s_y = o) \\
&      = \left [ 1 -\frac{I-1+1}{C_x-1} \times \frac{C_x-1}{C_y-1} \right]
       \Pr(s_y = o) \\
&      = \left[ 1 -\frac{I}{C_y-1} \right] \Pr(s_y = o) \\
&      = \frac{C_y - 1 -I}{C_y-1} \times \frac{1}{E_y-1} \\
&      = \frac{1}{C_y-1}
      \end{align*}
    \end{itemize}
  \item When $C_x > C_y$, this concerns case~\ref{item:Bigx}. The cases
    for $o$ are the following:
    \begin{itemize}
    \item $o = e_y$, this occurs in the case~\ref{item:BigEx} and
      case~\ref{item:BigBB} when $B'$ is \texttt{true}. We use the
      following deduction:
      \begin{align*}
        \Pr&(o_y = o) \\
           & = \Pr(\mbox{\ref{item:BigEx} or~(\ref{item:BigBB} and $B' = \mathtt{true}$)})\\
           & = \Pr(\mbox{\ref{item:BigEx}}) + \Pr(\mbox{\ref{item:BigBB} and } B' = \mathtt{true})\\
           & = \Pr(o_x = e_x) + \Pr(o_x \in E_x \setminus \{e_x\} \mbox{ and
             } B^* = \mathtt{false} \mbox{ and } B' = \mathtt{true})\\
           & = \frac{1}{C_x-1} + \frac{E_x-1}{C_x-1}
             (1-p^*)\left(1-\frac{(C_x-1)(E_y-1)}{(C_y-1)(E_x-1)(1-p^*)}\right)
        \\
           & = \frac{1}{C_x-1} + \frac{E_x-1}{C_x-1}
             \left(1-p^*-\frac{(C_x-1)(E_y-1)}{(C_y-1)(E_x-1)}\right) \\
           & = \frac{1}{C_x-1} + 
             \frac{E_x-1}{C_x-1}-\frac{E_x-1}{C_x-1}p^*-\frac{E_y-1}{C_y-1} \\
           & = \frac{1}{C_x-1} + 
             \frac{E_x-1}{C_x-1}-\left[\frac{1}{C_y-1}-\frac{1}{C_x-1}\right](I-1)-\frac{E_y-1}{C_y-1}
        \\
           & = \frac{1}{C_x-1} + 
             \frac{E_x-1}{C_x-1}-\frac{I-1}{C_y-1}+\frac{I-1}{C_x-1}-\frac{E_y-1}{C_y-1}
             \\
           & = \frac{E_x + I - 1}{C_x-1} - 
             \frac{E_y + I -1}{C_y-1} + \frac{1}{C_y-1} \\
           & = \frac{C_x - 1}{C_x-1} - 
             \frac{C_y -1}{C_y-1} + \frac{1}{C_y-1} \\
           & = \frac{1}{C_y-1} \\
      \end{align*}
    \item $o \in I \setminus \{i_y\}$, this occurs in
      case~\ref{item:BigCommon} and case~\ref{item:BigUn} when $B^*$ is
      \texttt{true}. We make the following deduction
      \begin{align*}
        \Pr&(o_y = o) \\
           & = \Pr(\mbox{\ref{item:BigCommon} or~\ref{item:BigUn}})\\
           & = \Pr(\mbox{\ref{item:BigCommon}}) + \Pr(\mbox{\ref{item:BigUn}})\\
           & = \Pr(o_x = o) + \Pr(o_x \in E_x \setminus \{e_x\} \mbox{ and
             } B^* = \mathtt{true} \mbox{ and
             } s_i = o)\\
           & = \Pr(o_x = o) + \Pr(o_x \in E_x \setminus \{e_x\})
             \Pr(B^* = \mathtt{true}) \Pr(s_i = o) \\
           & = \frac{1}{C_x-1} + \frac{E_x - 1}{C_x-1} \times
                       \left( \frac{1}{C_y-1} - \frac{1}{C_x-1}\right)
             \times \frac{(C_x-1)(I-1)}{E_x-1}
             \times \frac{1}{I-1}\\
           & = \frac{1}{C_x-1} + \frac{1}{C_y-1} - \frac{1}{C_x-1} \\
           & = \frac{1}{C_y-1} \\
      \end{align*}
    \item $o \in E_y \setminus \{e_y\}$, this occurs in
      case~\ref{item:BigBB} when $B'$ is \texttt{false}. We have the
      following deduction:
      \begin{align*}
        \Pr&(o_y = o) \\
           & = \Pr(\ref{item:BigBB} \mbox{ and } B' = \mathtt{false})\\
           & = \Pr(o_x \in E_x \setminus \{e_x\} \mbox{ and
             } B^* = \mathtt{false} \mbox{ and } B' = \mathtt{false}
             \mbox{ and } s_y = o)\\
           & = \Pr(o_x \in E_x \setminus \{e_x\})\Pr(B^* =
             \mathtt{false})\Pr(B' = \mathtt{false})\Pr(s_y = o)\\
           & = \frac{E_x-1}{C_x-1}
             (1-p^*)\frac{(C_x-1)(E_y-1)}{(C_y-1)(E_x-1)(1-p^*)} \times
             \frac{1}{E_y-1}\\
           & = \frac{1}{C_y-1}
      \end{align*}
    \end{itemize}
  \end{enumerate}
\end{proof}
\begin{lemma}
\label{lemma:nonMarkovian}
 The process we described is a non-Markovian coupling.
\end{lemma}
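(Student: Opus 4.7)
The plan is to mirror the proof of Lemma~\ref{lemma:Markovian}, upgrading its static probability identities to a statement that survives the non-Markovian bookkeeping on $U_y$. As in that proof, the $X_t$ marginal trivially matches $M_t$ since $X_t$'s behaviour is never altered, so the task reduces to showing that the $Y_t$ marginal still agrees with $M_t$, now averaged over the extra randomness encoded in $U_y$.

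First I would state and carry an inductive invariant through the coupling: conditional on the full history of the joint process, the actual value of $e_y$ is uniformly distributed over the elements of $U_y$. The base case is immediate because the coupling is initialised with $U_y = \{e_y\}$. For the inductive step, I would walk through each branch of the construction and verify that the update rule for $U_{y'}$ (or for both $U_{y'}$ and $U_{z'}$ in the distance-increasing cases) preserves uniformity. In the coalescing cases the invariant is vacuous; in preserving cases like~\ref{item:EqCommon} one uses either $U_{y'} = U_y$ (when $o_x \notin C_e$, so $e_{y'}=e_y$ is still uniform on the same set) or $U_{y'} = U_y \cap C_{e'}$ (when the cycle through $e_x$ shrinks, so the eligible candidates for $e_{y'}$ shrink with it); in cases like~\ref{item:EqUnC},~\ref{item:SmComUncommon} or~\ref{item:BigUn} the fresh set $U_{z'} = E_x \setminus \{e_x\}$ is uniform on $e_{z'}$ precisely because $e_{z'}$ is produced by the fresh uniform choice of $o_x$.

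Given the invariant, I would then rerun the marginal computation of Lemma~\ref{lemma:Markovian} for $\Pr(o_y = o)$, replacing the event ``$s_y = e_y$'' by ``$s_y \in U_y$'' wherever the coupling uses the latter to trigger $o_y = e_y$. By the inductive uniformity of $e_y$ within $U_y$, the total probability mass that the extended rule assigns to each target $o \in C_y \setminus \{i_y\}$ matches the mass assigned by the Markovian rule: every sum still telescopes to $1/(C_y-1)$, and the algebraic identities for $p$, $p^*$, and $p'$ go through verbatim, since the extended indicator only changes the route by which $o_y = e_y$ arises, not its overall frequency. A symmetric argument handles $\Pr(i_y = i) = 1/E$, which is literally unchanged from Lemma~\ref{lemma:Markovian} because the choice of $i_y$ is never re-routed through $U_y$.

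The hard part will be the case-by-case bookkeeping: verifying that the invariant really survives every update of $U_{y'}$, especially in the asymmetric branches where the underlying cycle $C_e$ itself mutates, and in the distance-increasing branches where a second auxiliary set $U_{z'}$ must be reconciled with $U_{y'}$. Once that invariant is secured the remaining algebra is essentially that of Lemma~\ref{lemma:Markovian}, combined with a linearity-of-expectation redistribution of mass over $U_y$, so no genuinely new probabilistic identities are required.
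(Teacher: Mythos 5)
Your proposal matches the paper's own argument: the paper likewise keeps the invariant that $e_y$ is uniform on $U_y$ (refined to $U_y \cap E_y$ once $i_x$ determines the cycles, using Lemma~\ref{lemma:CePartition} to ensure $U_y \cap I = \emptyset$), justifies it by inspecting the update rules for $U_{y'}$, and then reruns the case analysis of Lemma~\ref{lemma:Markovian} with the events ``$o = e_y$'' and ``$o \in E_y \setminus \{e_y\}$'' replaced by ``$o \in U_y \cap E_y$'' and ``$o \in E_y \setminus U_y$'', showing each target receives mass $1/(C_y-1)$. Your plan is the same decomposition; if anything you promise more of the case-by-case verification than the paper actually carries out (it works one case and asserts the rest are ``more cumbersome'').
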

\begin{proof}
  In the context of a Markovian coupling we analyse the transition from $y$
  to $y'$ given the information about $x$. In the non-Markovian case we
  will use less information about $x$. We assume that $e_y$ is a random
  variable and that $x$ provides only $e_x$ and $U_y$ and we know only that
  $e_y \in U_y$ and moreover that $\Pr(e_y = e) = 1/U_y$, for any
  $e \in U_y$ and $\Pr(e_y = e) = 0$ otherwise. Then the chain $X_t$ makes
  its move and provides information about $i_x$ and $o_x$. Let us consider
  only the cases when $Y_t$ then chooses $i_y = i_x$, because nothing
  changes in the cases where this does not happen. Now $i_x$ can be used to
  define $C_x$ and $C_y$ and, therefore, $E_x$ and $E_y$. We focus our
  attention on $E_y \cap U_y$ because, except for the trivial cases, we
  must have $e_y \in (E_y \cap U_y)$. Hence, we instead alter our condition
  to $\Pr(e_y = e) = 1/|E_y \cap U_y|$, for any $e \in (E_y \cap U_y)$ and
  $0$ otherwise.

  Note that this is a reasonable process because we established in
  Lemma~\ref{lemma:CePartition} that, in the non-trivial case, $E_x$ and
  $E_y$ partition $C_e$ and, therefore, because
  $U_y \subseteq C_e \setminus \{e_x\}$, we have that $E_x$ and $E_y$ also
  partition $U_y$. This means that $U_y \cap I = \emptyset$ and so
  we are not loosing any part of $U_y$ in this process, we are only
  dividing it into cases. This process is also the reason why, even when
  $s_y \notin U_y$ we define $U_{y'} = (U_y \cap C_y) \setminus \{o_y\} =
  (U_y \cap E_y) \setminus \{o_y\}$.

  Now the cases considered in Lemma~\ref{lemma:Markovian} must be
  changed. Substitute the original cases of $o = e_y$ for
  $o \in U_y \cap E_y$. Also, substitute the case
  $o \in E_y \setminus \{ e_y\}$ by $o \in E_y \setminus U_y$. The other
  cases remain unaltered. Except for the first case, the previous
  deductions still apply. We will exemplify how the deduction changes
  for $o \in U_y \cap E_y$. We consider only the situation when
  $|C_x| = |C_y|$. For the remaining situations, $C_x < C_y$ and
  $C_x > C_y$, we use a general argument. Hence our precise assumptions
  are: $o \in U_y \cap E_y$ and $|C_x| = |C_y|$.
  \begin{align*}
    \frac{\Pr(o_y = o)}{\Pr(e_y = o)} & = \Pr(o_x \in E_x \setminus \{e_x\} \mbox{ and } s_y
                                        \in U_y) + \Pr(o_x = e_x)\\
                                      & = \Pr(o_x \in E_x \setminus \{e_x\}) \Pr(s_y
                                        \in U_y) + \Pr(o_x = e_x) \\
                                      & = \frac{E_x-1}{C_x-1} \times
                                        \frac{|U_y \cap E_y|-1}{E_y-1} +
                                        \frac{1}{C_x-1} \\
                                      & = \frac{|U_y \cap E_y|}{C_x-1} \\
                                      & = \frac{|U_y \cap E_y|}{C_y-1}
  \end{align*}
  Which is correct according to our assumption.

  The general argument follows the above derivation. Whenever the Markovian
  coupling would produce $o_y = e_y$, we obtain $1/(C_y-1)$ probability.
  Moreover, for every edge such that $o_y \in E_y \setminus \{e_y\}$
  produced by the Markovian coupling we obtain another $1/(C_y-1)$
  probability. This totals to $|U_y \cap E_y|/(C_y-1)$ as desired. This
  also occurs in the cases when $C_x < C_y$ and $C_x > C_y$, only the
  derivations become more cumbersome.

  Finally will argue why the property that $\Pr(e_y = e) = 1/U_y$ when
  $e \in U_y$. The set $U_y$ is initialised to contain only the edge $e_y$,
  i.e., $U_y = \{ e_y \}$. As the coupling proceeds $U_{y'}$ is chosen to
  represent the edges, from which $e_{y'}$ was chosen, or is simply
  restricted if $e_y$ does not change. More precisely: in
  case~\ref{item:EqUy} we chose $e_{y'} = o_x$; in case~\ref{item:SmUnR} we
  chose $e_{z'} = o_x$; in case~\ref{item:SmCommonUy} we chose $e_{y'} =
  o_x$.
\end{proof}

We obtained no general bounds on the coupling we presented, it may even be
that such bounds are exponential even if the Markov chain has polynomial
mixing time. In fact~\citet*{814596} proved that this is the case for
Markovian couplings of the Jerrum-Sinclair
chain~\citep{DBLP:journals/siamcomp/JerrumS89}. Note that according to the
classification of~\citet*{814596} the coupling we present is considered as
time-variant Markovian. Hence their result applies to type of coupling we
are using, albeit we are considering different chains so it is not
immediate that indeed there exist no polynomial Markovian couplings for the
chain we presented.  Cycle graphs are the only class of graphs for which we
establish polynomial bounds, see Figure~\ref{fig:cycleex}.

\begin{theorem}
\label{teo:cycleC}
  For any cycle graph $G$ the mixing time $\tau$ of edge swap chain is
  $O(V)$ for the normal version of the chain and $1/\log_4(V-1)$ for the
  fast version.
\end{theorem}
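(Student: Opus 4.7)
The plan rests on the extreme simplicity of cycle graphs: a cycle on $V$ vertices has exactly $V$ spanning trees $\{A_1,\ldots,A_V\}$, where $A_i$ is obtained by removing a single edge $m_i$ from the cycle, and any two distinct spanning trees differ in a single missing edge. Hence the edge distance satisfies $d(x,y)\le 1$ for every pair of states, so no path-coupling argument is needed; it suffices to analyze the coupling directly at distance one.

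First I would enumerate which of the cases in Section~\ref{sec:coping} can fire when $d(x,y)=1$. Because $E\setminus A_x=\{m_x\}=\{e_y\}$, the edge $i_x$ drawn uniformly from $E$ either lies in $A_x$ (case~\ref{item:loop}, with $V-1$ choices) or equals $e_y$ (one choice). In the latter event $C_x=C_y$ coincides with the entire cycle, placing us in the trivial branch of Lemma~\ref{lemma:CePartition} with $E_x=E_y=\emptyset$, so only case~\ref{item:swap} (when $o_x=e_x$) and case~\ref{item:ExICol} (when $o_x\neq e_x$) can occur. In particular no subcase of~\ref{item:GenI} is ever reached, the bookkeeping set $U_y$ plays no role, and the coupling reduces to the Markovian form of Lemma~\ref{lemma:Markovian}.

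Next I would compute the coalescence probability. Only case~\ref{item:ExICol} coalesces, which happens when $i_x=e_y$ and $o_x\neq e_x$, with total probability
\[
\frac{1}{V}\cdot\frac{V-2}{V-1}=\frac{V-2}{V(V-1)}.
\]
By the standard coupling inequality, the total variation distance from the stationary distribution after $\tau$ steps is bounded by $\bigl(1-(V-2)/(V(V-1))\bigr)^{\tau}$; requiring this to drop below $1/4$ yields $\tau=O(V)$, establishing the bound for the normal chain.

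For the fast version, $i_x$ is sampled uniformly from $E\setminus A_x=\{m_x\}$, so every step is non-trivial and we are always in case~\ref{item:swap} (with probability $1/(V-1)$) or case~\ref{item:ExICol} (with probability $(V-2)/(V-1)$). The probability of not coalescing after $\tau$ steps is therefore $(V-1)^{-\tau}$, and forcing this below $1/4$ gives $\tau\ge \log 4/\log(V-1)=1/\log_4(V-1)$. The main obstacle is simply verifying case applicability cleanly: since the coupling of Section~\ref{sec:coping} was designed to handle cycles of differing sizes with compensating Bernoulli variables, one must confirm that the cycle-graph structure collapses all those refinements into the three trivial cases above, at which point the mixing-time computation becomes transparent.
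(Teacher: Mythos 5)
Your proposal is correct and follows essentially the same coupling argument as the paper: every pair of spanning trees of a cycle is at edge distance at most one, only the loop, swap (case~\ref{item:swap}) and coalescing (case~\ref{item:ExICol}) branches can fire, and the swap probability $1/(V-1)$ drives the fast-chain bound exactly as in the paper's proof. Your treatment of the slow chain is in fact slightly more careful than the paper's, which argues informally that the slow chain needs $V-1$ steps to emulate one fast step, whereas you compute the per-step coalescence probability $\tfrac{V-2}{V(V-1)}$ directly and solve $\bigl(1-\tfrac{V-2}{V(V-1)}\bigr)^{\tau}\le 1/4$ to get $\tau=O(V)$.
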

\begin{proof}
  For any two trees $A_x$ and $A_y$ we have a maximum distance of $1$ edge,
  i.e., $d(A_x, A_y) \leq 1$. Hence our coupling applies directly.

  For the fast version, case~\ref{item:loop} does not occur, because $i_x$
  is chosen from $E \setminus \{A_x\}$. Hence, the only cases that might
  apply are cases~\ref{item:swap} and~\ref{item:ExICol}. In first case, the
  chains preserve their distance and in the last case the distance is
  reduced to $0$. Hence, $E[d(X_1,Y_1)] = 1/(V-1)$, which corresponds to the
  probability of case~\ref{item:swap}. Each step of the coupling is
  independent, which means we can use the previous result and Markov's
  inequality to obtain
  $\Pr (d(X_\tau, Y_\tau) \geq 1) \leq 1/(V-1)^{\tau}$. Then, we use this
  probability in the coupling Lemma~\ref{lem:coupling} to obtain a
  variation distance of $1/4$, by solving the following equation:
  $1/(V-1)^{\tau} = 1/4$.

  For the slow version of the chain, case~\ref{item:loop} applies most of
  the time, i.e., for $V-1$ out of $V$ choices of $i_x$. It takes
  $V-1$ steps for the standard chain to behave as the fast chain and,
  therefore, the time should be $(V-1)/\log_4(V-1)$.
\end{proof}
This result is in stark contrast with the alternative algorithms, random
walk and Wilson's (see Section~\ref{sec:related-work}), which require
$O(V^2)$ time~\citep{levin2017markov}. More recent algorithms are also at
least $O(V^{4/3})$ for this case.

Moreover when a graph is a connect set of cycles connected by bridges or
articulation points we
can also establish a similar result. Figure~\ref{fig:genC} shows one such
graph.

\begin{figure}[tb]
\begin{center}
  \begin{pspicture}[showgrid=false](8,10)
\psset{shadow=true}
\rput(3,5){
\Cnode(1;0){A}
\Cnode(1;120){B}
\Cnode(1;240){C}

\Cnode(2.5;240){D}
\rput(2.5;240){
  \Cnode(1.5;180){E}
  \Cnode(1.5;-90){F}
  \rput(1.5;-90){\rput(3,0){
\Cnode(1;0){O}
\Cnode(1;120){P}
\Cnode(1;240){Q}
}}
}

\rput(3.5;0){
  \Cnode(1;0){G}
  \Cnode(1;90){H}
  \rput(2;90){
    \Cnode(1;0){HA}
    \Cnode(1;90){HB}
    \Cnode(1;180){HC}
}
  \Cnode(1;180){I}
  \Cnode(1;-90){J}
}

\rput(3.5;120){
  \Cnode(1;120){K}
  \Cnode(1;210){L}
  \Cnode(1;300){M}
  \Cnode(1;30){N}
}
}

\psset{shadow=false}
\ncline{A}{B}
\ncline{A}{C}
\ncline{B}{C}

\ncline{E}{D}
\ncline{F}{D}
\ncline{C}{D}

\ncline{G}{H}
\ncline{H}{I}
\ncline{I}{J}
\ncline{J}{G}

\ncline{H}{HA}
\ncline{HA}{HB}
\ncline{HB}{HC}
\ncline{HC}{H}

\ncline{A}{I}

\ncline{K}{L}
\ncline{L}{M}
\ncline{M}{N}
\ncline{N}{K}

\ncline{B}{M}

\ncline{O}{P}
\ncline{P}{Q}
\ncline{Q}{O}

\ncline{F}{P}
\end{pspicture}
\end{center}
\caption{Cycles connected by bridges or articulation points.}
  \label{fig:genC}
\end{figure}
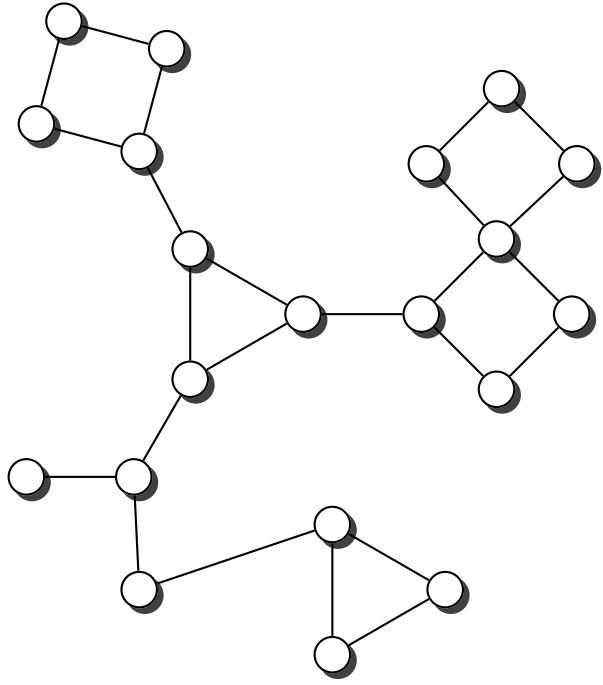
\begin{theorem}
\label{teo:cycleCpp}
For any graph $G$ which consists of $n$ simple cycles connect by bridges or
articulation points, such that $m$ is the size of the smallest cycle, then
the mixing time $\tau$ of the fast edge swap chain is the following:
\[
  \tau = \frac{\log (4n)}{\log \left(\frac{n(m-1)}{n(m-1)-(m-2)} \right)}
\]

The mixing time for the slow version is obtained by using $|E|$ instead of
$n$ in the previous expression.
\end{theorem}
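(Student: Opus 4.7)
The plan is to exploit the very restrictive structure of $G$: because cycles are glued only at bridges or articulation points, the cycle space of $G$ is generated by the $n$ given simple cycles, so every spanning tree of $G$ is obtained by choosing exactly one edge to omit from each cycle. A state of the chain is thus naturally encoded as an $n$-tuple $(f_1,\ldots,f_n)$ whose $i$-th coordinate is the missing edge from the $i$-th cycle, and the edge distance between two trees equals the Hamming distance between these tuples, in particular bounded above by $n$. A preliminary check I would carry out is that whenever a non-tree edge $(u,v)$ is added to a spanning tree, the resulting fundamental cycle coincides with the simple cycle of $G$ containing $(u,v)$; this is what makes the encoding consistent with the chain.

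Next I would use path coupling to reduce everything to pairs $x,y$ with $d(x,y)=1$, that is, trees that disagree only on one cycle $k$ of size $C_k \geq m$. Let $e_y$ denote the missing edge of cycle $k$ in $A_x$ and $e_x$ the missing edge of cycle $k$ in $A_y$. In the fast chain $i_x$ is uniform on the $n$-element set $E \setminus A_x$. With probability $(n-1)/n$, the edge $i_x$ is the missing edge of a cycle $k'\neq k$, which also lies in $E \setminus A_y$; this puts us in the trivial branch of Lemma~\ref{lemma:CePartition} with $C_x = C_y = I$, the coupling sets $i_y = i_x$ and $o_y = o_x$, and the distance is preserved. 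With probability $1/n$ we have $i_x = e_y$; we are then in case~\ref{item:swap} or case~\ref{item:ExICol} and set $i_y = e_x$. Both $A_x \cup \{e_y\}$ and $A_y \cup \{e_x\}$ produce cycle $k$ itself, so $o_x$ is uniform over the $C_k - 1$ remaining edges of cycle $k$. Coalescence fails only when $o_x = e_x$ (the chains swap), an event of probability $1/(C_k - 1)$. Aggregating, and using $C_k \geq m$,
\[
E[d(X_1,Y_1) \mid d(X_0,Y_0)=1] \;\leq\; 1 - \frac{1}{n}\cdot\frac{C_k-2}{C_k-1} \;\leq\; \frac{n(m-1)-(m-2)}{n(m-1)} \;=:\; \beta.
\]

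Standard path coupling then yields $E[d(X_t, Y_t)] \leq n \beta^t$ (the factor $n$ is the diameter of the state space in edge distance). Markov's inequality combined with the coupling lemma reduces the mixing condition to $n\beta^\tau \leq 1/4$, and solving for $\tau$ gives the stated expression. For the slow chain the reasoning is identical except that the probability of drawing the specific missing edge of cycle $k$ drops from $1/n$ to $1/|E|$ (self-loops and the selection of other non-tree edges both preserve distance, just as in case~\ref{item:loop} and the trivial branch above), so $n$ is simply replaced by $|E|$ in $\beta$ and hence in the final formula.

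The main obstacle I expect is verifying that in this family only the clean cases~\ref{item:swap} and~\ref{item:ExICol} can arise. The Bernoulli devices $B$, $B^*$, $B'$ of the general coupling exist precisely to re-balance probabilities when $|C_x| \neq |C_y|$, but here the cycle produced by the chain is always one of the $n$ fixed simple cycles of $G$, and both chains produce the same cycle in cycle $k$, forcing $C_x = C_y$ in every relevant branch. Writing that collapse carefully, and justifying that the worst-case contraction is truly realised by the smallest cycle rather than by some subtler interaction between distant cycles via shared articulation points, is the step that needs the most attention.
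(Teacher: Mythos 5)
Your proposal is correct and follows essentially the same route as the paper's own proof: a path-coupling argument over trees at edge distance one, with the contraction factor $E[d(X_1,Y_1)] \leq 1 - \frac{m-2}{n(m-1)}$ obtained from the probability $1/n$ (resp.\ $1/|E|$ for the slow chain) of inserting the missing edge of the differing cycle times the coalescence probability $(C_k-2)/(C_k-1) \geq (m-2)/(m-1)$, followed by solving $n\beta^{\tau} \leq 1/4$. Your writeup is in fact more careful than the paper's terse argument: you make explicit the tuple encoding of states, the fact that fundamental cycles coincide with the graph's simple cycles (which confines the coupling to cases~\ref{item:swap}, \ref{item:ExICol} and the trivial $C_x=C_y$ branch), and you correctly identify that the worst-case contraction is realised by the \emph{smallest} cycle, where the paper's proof misstates this as the largest.
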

\begin{proof}
  To obtain this result we use a path coupling argument. Then for two
  chains at distance $1$ we have
  $E[d(X_1,Y_1)] \leq \left(1 - \frac{m-2}{n(m-1)} \right)$.

We assume that the different edge occurs in the largest cycle. In general
the edges inserted and deleted do not alter this situation, hence the term
$1$. However with probability $1/n$ the chain $X_t$ inserts an edge that
creates the cycle where the diference occurs. In that case with probability
$(m-2)/(m-1)$ the chains coalesce. Hence applying path coupling the mixing
time must verify the following equation:
\[
n  \left(1 - \frac{m-2}{n(m-1)} \right)^\tau \leq \frac{1}{4}
\]

For the slow version the chain choose the correct edge with probability
$1/E$ instead of $1/n$.
\end{proof}

\subsection{Experimental Results}
\label{sec:experimental-results}
\subsubsection{Convergence Testing}
\label{sec:convergence-testing}
Before looking at the performance of the algorithm we started by testing
the convergence of the edge swap chain. We estimate the variation
distance after a varying number of iterations. The results are shown in
Figures~\ref{fig:sparse},~\ref{fig:cycle},~\ref{fig:dense},~\ref{fig:biK},~\ref{fig:dmP05},~\ref{fig:dmP25}, and~\ref{fig:dmP50}. We
now describe the structure of these figures. Consider for example
Figure~\ref{fig:dense}. The structure is the following:
\begin{itemize}
\item The bottom left plot shows the graph properties, the number of
  vertexes $V$ in the $x$ axis and the number edges $E$ on the $y$
  axis. For the dense case graph $0$ has $10$ vertexes and $45$
  edges. Moreover, graph $6$ has $40$ vertexes and $780$ edges. These graph
  indexes are used in the remaining plots.
\item The top left plot show the number of iterations $t$ of the chain in
  the $x$ axis and the estimated variation distance on the $y$ axis, for
  all the different graphs.
\item The top right plot is similar to the top left, but the $x$ axis
  contains the number of iterations divided by $(V^{1.3}+E)$. Besides the
  data this plot also show a plot of $\ln(1/\hat{\varepsilon})$ for reference.
\item The bottom right plot is the same as the top right plot, using
a logarithmic scale on the $y$ axis.
\end{itemize}
To avoid the plots from becoming excessively dense, we do not plot points
for all experimental values, instead plot one point out of 3. However, the
lines pass through all experimental points, even those that are not explicit.

The variation distance between two distributions $D_1$ and $D_2$ on a
countable state space $S$ is given by
$\Vert D_1 - D_2 \Vert = \sum_{x \in S} |D_1(x) - D_2(x)|/2$. This is the
real value of $\varepsilon$. However, the size of $S$ quickly becomes larger
than we can compute. Instead, we compute a simpler variation distance
$\Vert D_1 - D_2 \Vert_d$, where $S$ is reduced from the set of all
spanning trees of $G$ to the set of integers from $0$ to $V-1$, which
correspond to the edge distance, defined in Section~\ref{sec:coping}, of
the generated tree $A$ to a fixed random spanning tree $R$. More precisely,
we generate $20$ random trees, using a random walk algorithm described in
Section~\ref{sec:related-work}. For each of these trees, we compute
$\Vert \pi - M_t \Vert_d$, i.e., the simpler distance between the
stationary distribution $\pi$ and the distribution $M_t$ obtained by
computing $t$ steps of the edge swapping chain. To obtain $M_t$, we start
from a fixed initial tree $A_0$ and execute our chain $t$ times. This
process is repeated several times to obtain estimates for the corresponding
probabilities. We keep two sets of estimates $M_t$ and $M'_t$ and stop when
$\Vert M_t - M'_t \Vert_d < 0.05$. Moreover we only estimate values where
$\Vert \pi - M_t \Vert_d \geq 0.1$. We use the same criteria to estimate
$\pi$, but in this case the trees are again generated by the random walk
algorithm. The final value $\hat{\varepsilon}$ is obtained as the maximum
value obtained for the $20$ trees.

We generated dense graphs, sparse graphs and some in between graphs. The
\textbf{sparse} graphs are ladder graphs; an illustration of these graphs is
shown in Figure~\ref{fig:ladder}. The \textbf{cycle} graphs consist of a
single cycle, as shown in Figure~\ref{fig:cycleex}.
\begin{figure}[tbp]
  \begin{center}
  \begin{pspicture}(-0.6,-0.6)(13.0,3.0)
     \psset{shadow=true}
     \psset{radius=4mm}
     \psset{linecolor=black}
     \psset{linewidth=0.3mm}
     \psset{linestyle=solid}
     \Cnode(0,0){a0}
     \Cnode(2,0){a1}
     \Cnode(4,0){a2}
     \Cnode(6,0){a3}
     \Cnode(8,0){a4}
     \Cnode(10,0){a5}
     \Cnode(12,0){a6}
     \Cnode(0,2){a7}
     \Cnode(2,2){a8}
     \Cnode(4,2){a9}
     \Cnode(6,2){a10}
     \Cnode(8,2){a11}
     \Cnode(10,2){a12}
     \Cnode(12,2){a13}
     \psset{linewidth=0.1}
     \psset{doubleline=false}
     \psset{shadow=false}

     \ncline{-}{a0}{a7}
     \ncline{-}{a1}{a8}
     \ncline{-}{a2}{a9}
     \ncline{-}{a3}{a10}
     \ncline{-}{a4}{a11}
     \ncline{-}{a5}{a12}
     \ncline{-}{a6}{a13}

     \ncline{-}{a0}{a1}
     \ncline{-}{a1}{a2}
     \ncline{-}{a2}{a3}
     \ncline{-}{a3}{a4}
     \ncline{-}{a4}{a5}
     \ncline{-}{a5}{a6}

     \ncline{-}{a7}{a8}
     \ncline{-}{a8}{a9}
     \ncline{-}{a9}{a10}
     \ncline{-}{a10}{a11}
     \ncline{-}{a11}{a12}
     \ncline{-}{a12}{a13}

  \end{pspicture}
  \end{center}
  \caption{A ladder graph. }
\label{fig:ladder}
\end{figure}
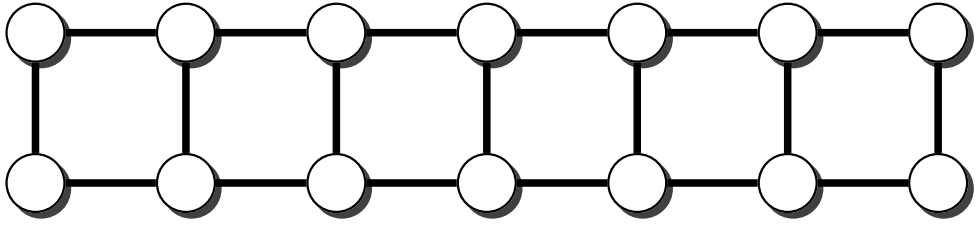
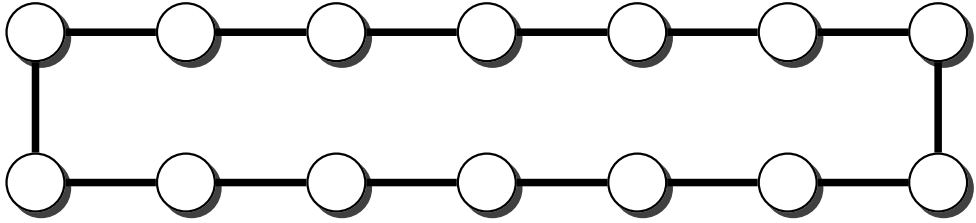
\begin{figure}[tbp]
  \begin{center}
  \begin{pspicture}(-0.6,-0.6)(13.0,3.0)
     \psset{shadow=true}
     \psset{radius=4mm}
     \psset{linecolor=black}
     \psset{linewidth=0.3mm}
     \psset{linestyle=solid}
     \Cnode(0,0){a0}
     \Cnode(2,0){a1}
     \Cnode(4,0){a2}
     \Cnode(6,0){a3}
     \Cnode(8,0){a4}
     \Cnode(10,0){a5}
     \Cnode(12,0){a6}
     \Cnode(0,2){a7}
     \Cnode(2,2){a8}
     \Cnode(4,2){a9}
     \Cnode(6,2){a10}
     \Cnode(8,2){a11}
     \Cnode(10,2){a12}
     \Cnode(12,2){a13}
     \psset{linewidth=0.1}
     \psset{doubleline=false}
     \psset{shadow=false}

     \ncline{-}{a0}{a7}
     \ncline{-}{a6}{a13}

     \ncline{-}{a0}{a1}
     \ncline{-}{a1}{a2}
     \ncline{-}{a2}{a3}
     \ncline{-}{a3}{a4}
     \ncline{-}{a4}{a5}
     \ncline{-}{a5}{a6}

     \ncline{-}{a7}{a8}
     \ncline{-}{a8}{a9}
     \ncline{-}{a9}{a10}
     \ncline{-}{a10}{a11}
     \ncline{-}{a11}{a12}
     \ncline{-}{a12}{a13}

  \end{pspicture}
  \end{center}
  \caption{A cycle graph.}
\label{fig:cycleex}
\end{figure}
The \textbf{dense} graphs
are actually the complete graphs $K_{V}$. We also generated other dense
graphs labelled \textbf{biK} which consisted of two complete graphs
connected by two edges. We also generated graphs based on the duplication
model \textbf{dmP}. Let $G_0 = (V_0, E_0)$ be an undirected and unweighted
graph. Given $0\leq p \leq 1$, the partial duplication model
builds a graph $G = (V, E)$ by partial duplication as follows~\citep{DBLP:journals/jcb/ChungLDG03}: start with
$G = G_0$ at time $t = 1$ and, at time $t > 1$, perform a duplication step:
\begin{enumerate}
\item Uniformly select a random vertex $u$ of $G$.
\item Add a new vertex $v$ and an edge $(u, v)$.
\item For each neighbor $w$ of $u$, add an edge $(v, w)$ with probability
  $p$.
\end{enumerate}
The different values at the end of {\bf dmP}, namely in Figures~\ref{fig:dmP05},~\ref{fig:dmP25} and~\ref{fig:dmP50}, correspond to the choices of $p$.

\begin{figure}[tbp]
  \begin{center}
    \hspace{-2cm}
    \scalebox{0.8}{        \input{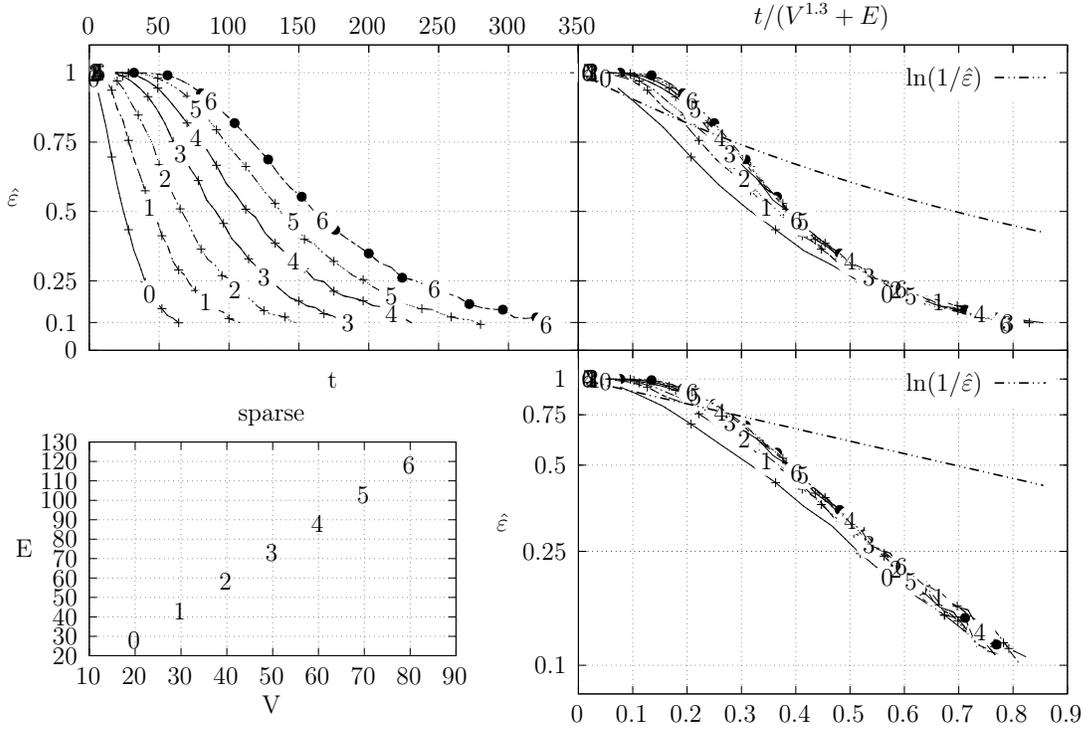}}
  \end{center}
  \caption{Estimation of variation distance as a function of the number of iterations for {\bf sparse} graphs (see Section~\ref{sec:convergence-testing} for details).}
  \label{fig:sparse}
\end{figure}
\begin{figure}[tbp]
  \begin{center}
    \hspace{-2cm}
    \scalebox{0.8}{        \input{cycle.tex}}
  \end{center}
  \caption{Estimation of variation distance as a function of the number of iterations for {\bf cycle} graphs (see Section~\ref{sec:convergence-testing} for details).}
\label{fig:cycle}
\end{figure}
\begin{figure}[tbp]
  \begin{center}
    \hspace{-2cm}
    \scalebox{0.8}{        \input{dense.tex}}
  \end{center}
  \caption{Estimation of variation distance as a function of the number of iterations for {\bf dense} graphs (see Section~\ref{sec:convergence-testing} for details).}
\label{fig:dense}
\end{figure}
\begin{figure}[tbp]
  \begin{center}
    \hspace{-2cm}
    \scalebox{0.8}{        \input{biK.tex}}
  \end{center}
  \caption{Estimation of variation distance as a function of the number of iterations for {\bf biK} graphs (see Section~\ref{sec:convergence-testing} for details).}
\label{fig:biK}
\end{figure}
\begin{figure}[tbp]
  \begin{center}
    \hspace{-2cm}
    \scalebox{0.8}{    \input{dmP05.tex}}
  \end{center}
  \caption{Estimation of variation distance as a function of the number of iterations for {\bf dmP} graphs (see Section~\ref{sec:convergence-testing} for details).}
  \label{fig:dmP05}
\end{figure}
\begin{figure}[tbp]
  \begin{center}
    \hspace{-2cm}
    \scalebox{0.8}{    \input{dmP25.tex}}
  \end{center}
  \caption{Estimation of variation distance as a function of the number of iterations for {\bf dmP} graphs (see Section~\ref{sec:convergence-testing} for details).}
  \label{fig:dmP25}
\end{figure}
\begin{figure}[tbp]
  \begin{center}
    \hspace{-2cm}
    \scalebox{0.8}{    \input{dmP50.tex}}
  \end{center}
  \caption{Estimation of variation distance as a function of the number of iterations for {\bf dmP} graphs (see Section~\ref{sec:convergence-testing} for details).}
  \label{fig:dmP50}
\end{figure}
These graphs show the convergence of the Markov chain and moreover
$V^{1.3}+E$ seems to be a reasonable bound for $\tau$. Still, these results
are not entirely binding. On the one hand the estimation of the variation
distance groups several spanning trees into the same distance, which means
that within a group the distribution might not be uniform, even if the
global statistics are good. So the actual variation distance may be larger
and the convergence might be slower. On the other hand we chose the
exponent 1.3 experimentally by trying to force the data of the graphs to
converge at the same point. The actual value may be smaller or larger.
\subsubsection{Coupling Simulation}
\label{sec:coupling-simulation}
As mentioned before, we obtained no general bounds on the coupling we presented. In
fact, experimental simulation for the coupling does not converge for all
classes of graphs. We obtained experimental convergence for cycle graph, as
expected from Theorem~\ref{teo:cycleC}, and for ladder graphs. For the
remaining graphs we used an optimistic version of the coupling which always
assumes that $s_y \in U_y$ and that $B^*$ fails. With these assumptions, all
the cases which increase the distance between states are eliminated and the
coupling always converges. Note that this approach does not yield a sound
coupling, but in practice we verified that this procedure obtained good
experimental variation distance. Moreover, the variation distance estimation
for these tests is not the simpler distance but the actual experimental
variation distance, obtained by generating several experimental trees, such
that in average each possible tree is obtained 100 times.

The simulation of the path coupling proceeds by generating a path with
$e \ln V$ steps, essentially selecting two trees at distance
$e \ln V$ from each other. This path is obtained by computing $e \ln V$
steps of the fast chain. Recall that our implementation and all simulations
use the fast version of the chain. The simulation ends once this path
contracts to size $\ln V$. Let $t'$ be the number of steps in this
process. Once this point is obtained our estimate for mixing time is
$\hat{\tau} = t' \ln V$. In general, we wish to obtain $\hat{\tau}$ such
that the probability that the two general chains $X_t$ and $X_t$ coalesce
is at least $75\%$. Hence, we repeat this process $4$ times and choose the
second largest value of $\hat{\tau}$ as our estimate.

Table~\ref{table:vd} summarizes results for the
experimental variation distance. The number of possible spanning trees
for each graph was computed through the Kirchoff's theorem. Then, we
generated 100 times the number of possible trees, and we computed the
variation distance. As stated above, we got good results for the
variation distance, getting a median well below 25\% for all tested graph topologies.
\begin{table}[tbp]
\caption{Variation distance (VD) for different graph topologies.
Median and maximum VD computed over 5 runs for each network. Since
{\bf dmP} graphs are random, results for {\bf dmP} were further
computed over 5 different graphs for each size $|V|$.}
\label{table:vd}
\begin{center}
\begin{tabular}{|c|c|c|c|}
\hline
Graph        & $|V|$          & median VD & max VD \\
\hline
{\bf dense}  & $\{5,7\}$      & 0.060     & 0.194  \\
{\bf biK}    & $\{8,10\}$     & 0.065     & 0.190  \\
{\bf cycle}  & $\{16,20,24\}$ & 0.001     & 0.004  \\
{\bf sparse} & $\{10,14,20\}$ & 0.053     & 0.110  \\
{\bf torus}  & $\{9,12\}$     & 0.094     & 0.383  \\
{\bf dmP}    & $\{8,10,12\}$  & 0.069     & 0.270  \\
\hline
\end{tabular}
\end{center}
\end{table}

We now present experimental results for larger graphs where we use the optimistic coupling.
All experiments were conducted on a computer with an
Intel(R) Xeon(R) CPU E5-2630 v3 @ 2.40GHz with 4 cores and 32GB of RAM.
We present running times for different graph topologies and sizes
in Figures~\ref{fig:densetime}, \ref{fig:eyestime}, \ref{fig:cycletime},
\ref{fig:sparsetime}, \ref{fig:2dstime}, \ref{fig:2dtime}  and \ref{fig:dmtime}.
Note beforehand that the coupling estimate needs only to be computed once for each graph.
Once the estimate is known, we can generate as many spanning trees as we want.
Although the edge swapping method is not always the faster compared
with the random walk and the Wilson's algorithm, it is competitive
in practice for {\bf dmP} and {\bf torus} graphs, and it is faster
for {\bf biK}, {\bf cycle} and {\bf sparse} (ladder) graphs. As expected,
it is less competitive for {\bf dense} graphs.
Hence, experimental results seem to point out that the edge
swapping method is more competitive in practice for those
instances that are harder for random walk based methods, namely {\bf biK}
and {\bf cycle} graphs.
The results for {\bf biK} and {\bf dmP} are of particular interest as most
real networks seem to include these kind of topologies, i.e., they include
communities and they are scale-free~\cite{chung2006complex}.

\begin{figure}[tbp]
  \begin{center}
    \scalebox{0.75}{        \input{dense_time.tex}}
  \end{center}
\caption{Running times for {\bf dense} (fully connected) graphs averaged over five runs, including the running time for computing the optimistic coupling estimate, the running time for generating a spanning tree based on that estimate and on the edge swapping algorithm, the running time for generating a spanning tree through a random walk, and also the running time for Wilson's algorithm.}
  \label{fig:densetime}
\end{figure}
\begin{figure}[tbp]
  \begin{center}
    \scalebox{0.75}{        \input{eyes_time.tex}}
  \end{center}
\caption{Running times for {\bf biK} graphs averaged over five runs, including the running time for computing the optimistic coupling estimate, the running time for generating a spanning tree based on that estimate and on the edge swapping algorithm, the running time for generating a spanning tree through a random walk, and also the running time for Wilson's algorithm.}
  \label{fig:eyestime}
\end{figure}
\begin{figure}[tbp]
  \begin{center}
    \scalebox{0.75}{        \input{cycle_time.tex}}
  \end{center}
\caption{Running times for {\bf cycle} graphs averaged over five runs, including the running time for computing the optimistic coupling estimate, the running time for generating a spanning tree based on that estimate and on the edge swapping algorithm, the running time for generating a spanning tree through a random walk, and also the running time for Wilson's algorithm.}
  \label{fig:cycletime}
\end{figure}
\begin{figure}[tbp]
  \begin{center}
    \scalebox{0.75}{        \input{sparse_time.tex}}
  \end{center}
\caption{Running times for {\bf sparse} (ladder) graphs averaged over five runs, including the running time for computing the optimistic coupling estimate, the running time for generating a spanning tree based on that estimate and on the edge swapping algorithm, the running time for generating a spanning tree through a random walk, and also the running time for Wilson's algorithm.}
  \label{fig:sparsetime}
\end{figure}
\begin{figure}[tbp]
  \begin{center}
    \scalebox{0.75}{        \input{2ds_time.tex}}
  \end{center}
\caption{Running times for square {\bf torus} graphs averaged over five runs, including the running time for computing the optimistic coupling estimate, the running time for generating a spanning tree based on that estimate and on the edge swapping algorithm, the running time for generating a spanning tree through a random walk, and also the running time for Wilson's algorithm.}
  \label{fig:2dstime}
\end{figure}
\begin{figure}[tbp]
  \begin{center}
    \scalebox{0.75}{        \input{2d_time.tex}}
  \end{center}
\caption{Running times for rectangular {\bf torus} graphs averaged over five runs, including the running time for computing the optimistic coupling estimate, the running time for generating a spanning tree based on that estimate and on the edge swapping algorithm, the running time for generating a spanning tree through a random walk, and also the running time for Wilson's algorithm.}
  \label{fig:2dtime}
\end{figure}
\begin{figure}[tbp]
  \begin{center}
    \scalebox{0.75}{        \input{dm_time.tex}}
  \end{center}
\caption{Running times for {\bf dmP} graphs averaged over five runs, including the running time for computing the optimistic coupling estimate, the running time for generating a spanning tree based on that estimate and on the edge swapping algorithm, the running time for generating a spanning tree through a random walk, and also the running time for Wilson's algorithm.}
  \label{fig:dmtime}
\end{figure}

\section{Related Work}
\label{sec:related-work}

For a detailed exposure on probability on trees and networks
see~\citet*[Chapter 4]{lyons2016probability}. As far as we know, the initial
work on generating uniform spanning trees was by
\citet*{Aldous:1990:RWC:87414.87417} and \citet*{Broader89}, which obtained
spanning trees by performing a random walk on the underlying graph. The
author also further studied the properties of such random
trees~\citep{RSA:RSA3240010402}, namely giving general closed formulas for
the counting argument we presented in Section~\ref{sec:problem}. In the
random walk process a vertex $v$ of $G$ is chosen and at each step this
vertex is swapped by an adjacent vertex, where all neighboring vertices are
selected with equal probability. Each time a vertex is visited by the first
time the corresponding edge is added to the growing spanning tree. The
process ends when all vertexes of $G$ get visited at least once. This
amount of steps is known as the cover time of $G$.

To obtain an algorithm that is faster than the cover time, 
\citet*{Wilson:1996:GRS:237814.237880} proposed a different approach. A
vertex $r$ of $G$ is initially chosen uniformly and the goal is to hit this
specific vertex $r$ from a second vertex, also chosen uniformly from
$G$. This process is again a random walk, but with a loop erasure
feature. Whenever the path from the second vertex intersects itself all the
edges in the corresponding loop must be removed from the path. When the
path eventually reaches $r$ it becomes part of the spanning tree. The
process then continues by choosing a third vertex and also computing a loop
erasure path from it, but this time it is not necessary to hit $r$
precisely, it is enough to hit any vertex on the branch that is already
linked to $r$. The process continues by choosing random vertexes an
computing loop erasure paths that hit the spanning tree that is already
computed.

We implemented the above algorithms, as they are accessible, although
several theoretical results where obtained in recent years we are not aware
of an implementation of such algorithms. We will now survey these results.

Another approach to this problem relies on the Kirchoff's
Theorem~\citet*{ANDP:ANDP18471481202} that counts the number of spanning
trees by computing the determinant of a certain matrix, related to the
graph $G$. This relation is researched
by~\citet*{Guenoche,KULKARNI1990185}, which yielded an $O(E V^3)$ time
algorithm. This result was improved to $O(V^{2.373})$, by
\citet*{DBLP:journals/jal/ColbournDN89,DBLP:journals/jal/ColbournMN96},
where the exponent corresponds to the fastest algorithm to compute matrix
multiplication. Improvements on the random walk approach where obtained
by~\citet*{5438651,mkadry2011graphs}, culminating in an
$\tilde{O}(E^{o(1)+4/3})$ time algorithm
by~\citet*{doi:10.1137/1.9781611973730.134}, which relies on insight
provided by the effective resistance metric.

Interestingly, the initial work by \citet*{Broader89} contains reference to
the edge swapping chain we presented in this paper (Section 5, named the
swap chain). The author mentions that the mixing time of this chain is
$E^{O(1)}$, albeit the details are omitted. As far as we can tell, this
natural approach to the problem did not receive much attention precisely
due to the lack of an efficient implementation. Even though link cut trees
were known at the
time~\citep*{Sleator:1981:DSD:800076.802464,Sleator:1985:SBS:3828.3835}
their application to this problem was not established prior to this
work. Their initial application was to network
flows~\citep{Goldberg:1989:FMC:76359.76368}. We also found another reference
to the edge swap in the work of~\citet*{sinclair1992improved}. In the
proposal of the canonical path technique the author mentions this
particular chain as a motivating application for the canonical path
technique, still the details are omitted and we were not able to obtain
such an analysis.

We considered the LCT version where the auxiliary trees are implemented
with splay trees~\citet{Sleator:1985:SBS:3828.3835}, i.e., the auxiliary
data structures we mentioned in Section~\ref{sec:idea} are splay
trees. This means that in step 5 of Algorithm~\ref{randomizeStep} all the
vertexes involved in the path $C \setminus \{(u,v)\}$ get stored in a splay
tree. This path oriented approach of link cut trees makes them suitable for
our goals, as opposed to other dynamic connectivity data structures such as
Euler tour trees~\citep{henzinger1995randomized}.

Splay trees are self-adjusting binary search trees, therefore the vertexes
are ordered in such a way that the inorder traversal of the tree coincides
with the sequence of the vertexes that are obtained by traversing
$C \setminus \{(u,v)\}$ from $u$ to $v$. This also justifies why the size
of this set can also be obtained in $O(\log V)$ amortized time. Each node
simply stores the size of its sub-tree and these values are efficiently
updated during the splay process, which consists of a sequence of
rotations. Moreover, these values can also be used to \texttt{Select} an
edge from the path. By starting at the root and comparing the tree sizes to
$i$ we can determine if the first vertex of the desired edge is on the left
sub-tree, on the root or on the right sub-tree. Likewise we can do the same
for the second vertex of the edge in question. These operations splay the
vertexes that they obtain and therefore the total time depends on the
\texttt{Splay} operation. The precise total time of the \texttt{Splay}
operation is $O((V +1) \log n)$, however the $V \log V$ term does not
accumulate over successive operations, thus yielding the bound of
$O((V + \tau ) \log V)$ in Theorem~\ref{teo1}. In general the $V \log V$
term should not be a bottleneck because for most graphs we should have
$\tau > V$. This is not always the case, if $G$ consists of a single cycle
then $\tau = 1$, but $V$ may be large. Figure~\ref{fig:cycleex} shows an
example of such a graph.

We finish this Section by reviewing the formal definitions of variational
distance and mixing time $\tau$~\citep*{Mitzenmacher:2005:PCR:1076315}.
\begin{definition}
  The variation distance between two distributions $D_1$ and $D_2$ on  a
  countable space $S$ is given by
  \begin{equation}
    ||D_1 - D_2|| = \sum_{x \in S} \frac{|D_1(x)-D_2(x)|}{2}
  \end{equation}
\end{definition}
\begin{definition}
  Let $\pi$ be the stationary distribution of a Markov chain with state
  space $S$. Let $p_x^t$ represent the distribution of the state of the
  chain starting at state $x$ after $t$ steps. We define
  \begin{equation}
    \Delta_x(t) = |p_x^t - \pi|
  \end{equation}
  \begin{equation}
    \Delta(t) = \max_{x \in S} \Delta_s(t)
  \end{equation}
That is $ \Delta_x(t)$ is the variation distance between the stationary
distribution and $p_x^t$ and $\Delta(t)$ is the maximum of these values
over all states $x$. We also define
  \begin{equation}
    \tau_x(\varepsilon) = min \{t:\Delta_x(t) \leq \varepsilon\}
  \end{equation}
  \begin{equation}
    \tau(\varepsilon) = \max_{x \in S} \tau_x(\varepsilon)
  \end{equation}
\end{definition}
When we refer only to the mixing time we mean $\tau(1/4)$. Finally the
coupling Lemma justifies the coupling approach:
\begin{lemma}
\label{lem:coupling}
  Let $Z_t = (X_t, Y_t)$ be a coupling for a Markov chain $M$ on a state
  space $S$. Suppose that there exists a $T$ such that, for every $x,y \in
  S$,
  \begin{equation}
    \Pr(X_T \neq Y_T | X_0 = x, Y_0 = y) \leq \varepsilon
  \end{equation}
  Then $\tau(\varepsilon) \leq T$. That is, for any initial state, the
  variation distance between the distribution of the state of the chain
  after $T$ steps and the stationary distribution is at most $\varepsilon$.
\end{lemma}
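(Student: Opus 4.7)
The plan is to fix an arbitrary starting state $x \in S$ and construct a coupling in which $X_0 = x$ and $Y_0$ is drawn from the stationary distribution $\pi$. Because each marginal of a coupling behaves as the original chain $M_t$, and because $\pi$ is stationary for $M_t$, the distribution of $Y_T$ is again $\pi$, while $X_T$ is distributed according to $p_x^T$. This reduces the problem to bounding the variation distance between the distributions of $X_T$ and $Y_T$ at a fixed horizon.

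Next I would exploit the standard fact that total variation distance can be written as $\|D_1 - D_2\| = \sup_{A \subseteq S} |D_1(A) - D_2(A)|$. For any event $A$,
\[
p_x^T(A) - \pi(A) = \Pr(X_T \in A) - \Pr(Y_T \in A) \leq \Pr(X_T \in A,\ Y_T \notin A) \leq \Pr(X_T \neq Y_T),
\]
and by symmetry the reverse difference obeys the same bound. Taking the supremum over $A$ gives
\[
\| p_x^T - \pi \| \leq \Pr(X_T \neq Y_T \mid X_0 = x,\ Y_0 \sim \pi).
\]

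Finally I would discharge the conditioning on $Y_0$ by averaging. By the hypothesis of the lemma, for every $y \in S$ we have $\Pr(X_T \neq Y_T \mid X_0 = x,\ Y_0 = y) \leq \varepsilon$, and averaging this inequality against the distribution $\pi$ preserves the bound. Thus $\Delta_x(T) = \| p_x^T - \pi \| \leq \varepsilon$. Since $x$ was arbitrary, $\Delta(T) = \max_{x \in S} \Delta_x(T) \leq \varepsilon$, and by the definition of $\tau$ this gives $\tau(\varepsilon) \leq T$, as required.

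The argument is classical and contains no real obstacle; the only subtlety is the clean identification of variation distance with the supremum over events and the careful use of stationarity to ensure that the $Y_t$ marginal remains distributed as $\pi$ at time $T$. The hypothesis of uniform coupling bound over all $(x,y)$ makes the averaging step trivial; if one had only the weaker bound after averaging over $\pi$, the same proof would still work verbatim, which is reassuring.
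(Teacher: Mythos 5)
Your proof is correct and is the standard textbook argument for the coupling lemma. Note that the paper itself does not prove this statement: Lemma~\ref{lem:coupling} is quoted in the related-work section directly from \citet{Mitzenmacher:2005:PCR:1076315}, so there is no in-paper proof to compare against. Your route --- start $Y_0$ from the stationary distribution, use stationarity to keep $Y_T \sim \pi$, bound $|p_x^T(A) - \pi(A)|$ by $\Pr(X_T \neq Y_T)$ via the supremum-over-events characterization of variation distance (which for a countable $S$ agrees with the paper's definition $\sum_{s}|D_1(s)-D_2(s)|/2$), and then average the uniform hypothesis over $y \sim \pi$ --- is exactly the classical proof, and every step checks out.
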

If there is a distance $d$ defined in $S$ then the property $X_T \neq Y_T$
can be obtained using the condition $d(X_T, Y_T) \geq 1$. For this
condition we can use the Markovian inequality
$\Pr(d(X_T, Y_T) \geq 1) \leq E[d(X_T, Y_T)]$. The path coupling
technique~\citep*{646111} constructs a coupling by chaining several chains,
such that the distance between then is 1. Therefore we obtain
$d(X_T, Y_T) = d(X_T^0,X_T^1) + d(X_T^1,X_T^2) + \ldots +
d(X_T^{D-1},X_T^D) = 1 + 1 + \ldots + 1$, where $X_T = X_T^0$ and
$Y_T = X_T^D$.
\section{Conclusions and Future Work}
\label{sec:conclusions}

In this paper we studied a new algorithm to obtain the spanning trees of a
graph in an uniform way. The underlying Markov chain was initially sketched
by~\citet*{Broader89} in the early study of this problem. We further extended
this work by proving the necessary Markov chain properties and using the
link cut tree data structure. This allows for a much faster implementation
than repeating the DFS procedure. This may actually be the reason why
this approach has gone largely unnoticed during this time.

The main shortcoming of our approach is the lack of a general theoretical
bound of the mixing time. Such a bound might be possible using new
approaches as the insight into the resistance
metric~\citep*{doi:10.1137/1.9781611973730.134}. Although a general
analysis would be valuable we addressed this problem by simulating a
coupling, both sound or optimistic. The lack of analysis is not a
shortcoming of the algorithm itself, which is both practical and
efficient. We implemented it and compared it against existing
alternatives. The experimental results show that it is very competitive. A
theoretical bound would still be valuable, specially if it is the case that
this algorithm is more efficient than the alternatives.

On the one hand computing the mixing time of the underlying chain is
complex, time consuming and hard to analyse in theory. On the other hand
the user of this process can fix a certain number of steps to execute. This
is a very useful parameter, as it can be used to swap randomness for
time. Depending on the type of application the user may sacrifice the
randomness of the underlying trees to obtain faster results or on the
contrary spend some extra time to guarantee randomness. Existing algorithms
do not provide such a possibility.

As a final note, we point out that our approach can be generalized by
assigning weights to the edges of the graph. The edge to be inserted can
then be selected with a probability that corresponds to its weight, divided
by the global sum of weights. Moreover the edge to remove from the cycle
should be removed according to its weight. The probability should be its
weight divided by the sum of the cycle weights. The ergodic analysis of
Section~\ref{sec:analysis} generalizes easily to this case, so this chain
also generates spanning trees uniformly, albeit the analysis of the
coupling of Section~\ref{sec:coping} might need some adjustments. A proper
weight selection might obtain a faster mixing timer, possibly something
similar to the resistance of the edge.

\section*{Acknowledgements}
This work was funded in part by
  European Union's Horizon 2020 research and innovation programme
  under the Marie Sk{\l}odowska-Curie grant agreement No 690941 and
  national funds through Funda\c{c}\~{a}o para a Ci\^{e}ncia e a
  Tecnologia (FCT) with reference UID/CEC/50021/2013.
%



\section*{Bibliography}
\bibliographystyle{elsarticle-harv}
\bibliography{elsarticle-bibliography}






\end{document}